\DeclareRobustCommand{\VAN}[3]{#2} 
\tikzset{every picture/.style={>=stealth'}}
\newcommand{\tikzdot}[1]{
  \protect\tikz[baseline=-3.4pt]{
    \fill[color=#1, opacity=0.7] circle[radius=1.7pt];
  }
}
\def\addlegendimage{\pgfplots@addlegendimage}
\tikzstyle{dots_significant}=[only marks, mark=o, mark size=1.8pt, mark options={color=black, opacity=0.9}]
\colorlet{qbf}{cyan}
\tikzstyle{dots_qbf}=[only marks, mark=*, mark size=1.8pt, mark options={color=qbf, fill=qbf, opacity=0.5}]
\tikzstyle{x_qbf}=[only marks, mark=x, mark size=3pt, mark options={color=qbf!80!black, fill=qbf, opacity=0.9}]
\tikzstyle{plot_qbf}=[color=qbf, dashed, line width=0.7pt, opacity=0.7]
\colorlet{goe}{orange}
\tikzstyle{dots_goe}=[only marks, mark=*, mark size=1.8pt, mark options={color=goe, fill=goe, opacity=0.5}]
\tikzstyle{x_goe}=[only marks, mark=x, mark size=3pt, mark options={color=goe!80!black, fill=goe, opacity=0.9}]
\tikzstyle{plot_goe}=[color=goe, dashed, line width=0.7pt, opacity=0.7]
\lstdefinelanguage{pseudo}{
  morekeywords=[1]{
    break, continue, each, else, for, if, loop, let, otherwise, repeat, return,
    then, until, while, NIL, Apply, Reduce, },
  morecomment=[l]{//},
  numbers=left,
  literate={:=}{{$\gets$}}1
  {<=}{{$\leq$}}1
  {>=}{{$\geq$}}1
  {<>}{{$\neq$}}1
  {!}{{$\neg$}}1
  {->}{{$\rightarrow$}}1
  {=>}{{$\Rightarrow$}}1
}
\newcommand{\B}[0]{\ensuremath{\mathbb{B}}}
\newcommand{\Oh}[1]{\ensuremath{\mathcal{O} ( #1 )}}
\newcommand{\scan}[1]{\text{scan} ( #1 )}
\newcommand{\sort}[1]{\text{sort} ( #1 )}
\newcommand{\PQ}[2]{\ensuremath{Q_{\mathit{#1}:\mathit{#2}}}}
\newcommand{\LIST}[2]{\ensuremath{L_{\mathit{#1}:\mathit{#2}}}}
\newcommand{\FILE}[1]{\ensuremath{F_{\mathit{#1}}}}
\newcommand{\arc}[3][solid]{
  \ensuremath{#2}
  \,
  \tikz[baseline=-\the\dimexpr\fontdimen22\textfont2\relax]{
    \draw[->, #1](0,0) -- ++(1.2em,0);
  }
  \,
  \ensuremath{#3}
}
\newcommand{\uidof}[1]{\ensuremath{#1}.\texttt{uid}}
\newcommand{\topof}[1]{\ensuremath{#1}.\texttt{var}}
\newcommand{\lowof}[1]{\ensuremath{#1.\texttt{low}}}
\newcommand{\highof}[1]{\ensuremath{#1.\texttt{high}}}
\def\orcidID#1{\smash{\href{http://orcid.org/#1}{\protect\raisebox{-1.25pt}{\protect\includegraphics{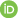}}}}}
\newcommand*{\mailto}[1]{\href{mailto:#1}{\nolinkurl{#1}}}
\def\arxiv{1}
  \newcommand{\Nprime}[0]{\ensuremath{N'}}
  \newcommand{\Nprime}[0]{\ensuremath{N}}
\title{Multi-variable Quantification of BDDs\\in External Memory using Nested Sweeping}
\titlerunning{Multi-variable Quantification of BDDs in External Memory}
\author
{
  Steffan Christ S{\o}lvsten
  \if\arxiv0
    (\Envelope)
  \fi
  \orcidID{0000-0003-0963-6569}
  \and
  Jaco van de Pol
  \orcidID{0000-0003-4305-0625}
}
\authorrunning
{
  S. C. S{\o}lvsten and J. van de Pol
}
\institute
{
  Aarhus University, Denmark
  \texttt{\{%
    \href{mailto:soelvsten@cs.au.dk}{\color{black} soelvsten},%
    \href{mailto:jaco@cs.au.dk}{\color{black} jaco}%
    \}@cs.au.dk}
}
\begin{document}

\maketitle


\begin{abstract}
  Previous research on the Adiar BDD package has been successful at designing algorithms capable of
  handling large Binary Decision Diagrams (BDDs) stored in external memory. To do so, it uses
  consecutive sweeps through the BDDs to resolve computations. Yet, this approach has kept
  algorithms for multi-variable quantification, the relational product, and variable reordering out
  of its scope.

  In this work, we address this by introducing the \emph{nested sweeping} framework. Here, multiple
  concurrent sweeps pass information between each other to compute the result. We have implemented
  the framework in Adiar and used it to create a new external memory multi-variable quantification
  algorithm. In practice, this improves Adiar's running time by a factor of $1.7$. In turn, this
  work extends the previous research results on Adiar to also apply to its quantification operation:
  compared to conventional depth-first implementations, Adiar with nested sweeping is able to solve
  more problems and/or solve them faster.

  \if\arxiv0
    \keywords{
      Time-forward Processing \and
      External Memory Algorithms \and
      Binary Decision Diagrams
    }
  \fi
\end{abstract}


\section{Introduction} \label{sec:introduction}

The ability of Binary Decision Diagrams (BDDs) to represent Boolean formulae as small directed
acyclic graphs (DAGs) have made them an invaluable tool to solve many complex problems. For example,
recently they have been used to check type-and-effect systems \cite{Madsen2020,Madsen2023}, to
generate proofs for SAT and QBF solvers \cite{Bryant2021:SAT,Bryant2021:QBF,Bryant2022}, for circuit
synthesis \cite{Fried2016,Yi2022}, to solve games~\cite{Dijk2024,Michaud2018,Renkin2022}, and for
symbolic model checking
\cite{Cimatti2000,Gammie2004,Ciardo2009,Kant2015,Lomuscio2017,He2020,Amparore2022}.

Implementations of decision diagrams conventionally make use of recursive depth-first algorithms and
a unique node table \cite{Brace1990,Minato1990,Karplus1988,Somenzi2015,Lind1999,Dijk2016}. Both of
these introduce random access, which pauses the entire computation while missing data is fetched
\cite{Klarlund1996,Minato2001,Pastva2023}. For large enough instances, data has to reside on disk
and the resulting I/O-operations that ensue become the bottle-neck.

Adiar~\cite{Soelvsten2022:TACAS} is a BDD package written in C++ based on the ideas of Lars
Arge~\cite{Arge1995:1}: the depth-first recursive algorithms are replaced with iterative algorithms.
Here, one or more priority queues reorder the execution of recursive calls such that they are
synchronised with a level-by-level traversal of the inputs. This makes Adiar's algorithms, unlike
the conventional recursive implementations, optimal in the I/O-model~\cite{Aggarwal1987} of Aggarwal
and Vitter \cite{Arge1995:1,Arge1996}. In turn, this enables it to manipulate BDDs beyond the reach
of conventional BDD packages at a negligible cost to its running time \cite{Soelvsten2022:TACAS}.

Yet, the ideas in \cite{Arge1995:1,Arge1996,Soelvsten2022:TACAS} only provide a translation of the
simplest BDD algorithms, which do not recurse on the result of other recursive calls.
This does not provide a
way to translate the more complex BDD algorithms that recurse on intermediate recursion results,
e.g.\ multi-variable quantification. Hence, until this work, Adiar could not easily be used for
solving Quantified Boolean formul{\ae} (QBF). Furthermore, game solving and symbolic model checking
has until now been out of reach for Adiar.

\subsection{Contributions}

In \cref{sec:theory}, we introduce the notion of \emph{nested sweeping} to provide a framework on
which these more complex BDD operations can be implemented. Here, an \emph{outer} bottom-up sweep
accumulates the results from multiple nested \emph{inner} sweeps. With this framework in hand, we
implement an I/O-efficient multi-variable quantification akin to the one in conventional BDD
packages. %
\if\arxiv1%
  Furthermore, we identify in \cref{sec:theory:nested optimisations} optimisations for the nested
  sweeping framework in general and in \cref{sec:theory:quantify optimisations} for the
  quantification operation in particular. \cref{sec:implementation} provides an overview of the
  implementation while \cref{sec:experiments} %
\else%
  Furthermore, we identify in \cref{sec:theory:nested optimisations} optimisations for the nested
  sweeping framework. The full paper~\cite{Soelvsten2024:arXiv} also includes optimisations specific
  to the quantification operation and an overview of the implementation in Adiar.
  \Cref{sec:experiments} %
\fi%
shows that nested sweeping improves the running time in practice by a factor of $1.7$ when solving
QBF-encodings of two-player games and when reasoning about the transition system in Conway's Game of
Life~\cite{Martin1970}. We compare our approach to related work in \cref{sec:related work} and
provide our conclusions and future work in \cref{sec:conclusion}.

\section{Preliminaries} \label{sec:preliminaries}

\subsection{The I/O-Model} \label{sec:preliminaries:io}

Aggarwal and Vitter introduced the I/O-model~\cite{Aggarwal1987} to analyse the cost of transferring
data to and from a slow storage device. Here, computations can only operate on data that resides in
\emph{internal} memory, e.g. the RAM, with a finite size of $M$. Hence, if the input of size $N$ (or
some intermediate result) exceeds $M$ then it needs to be transferred to and from \emph{external}
memory, e.g. the disk. Yet, each such data transfer (I/O) moves an entire consecutive block of $B$
elements; an algorithm's I/O-complexity is the number of I/Os it uses.

One needs $\scan{N} \triangleq N/B$ I/Os to linearly scan through a consecutive list of $N$ elements
in external memory \cite{Aggarwal1987}. Assuming $N > M$, one needs to use $\Theta(\sort{N})$ I/Os
to sort $N$ elements, where $\sort{N} \triangleq N/B \cdot \log_{M/B}(N/B)$ \cite{Aggarwal1987}.
Furthermore, one can design an I/O-efficient priority queue capable of doing $N$ insertions and
deletions in $\Theta(\sort{N})$ I/Os \cite{Arge1995:2}.
For simplicity, we overload $\scan{N}$ to be $N$ and $\sort{N}$ to be $N \log_2 N$ when referring to
an algorithm's time complexity rather than its I/O complexity.

Intuitively, an algorithm is I/O-inefficient if it uses an entire I/O to retrieve a block but does
not make use of a significant portion of the $B$ elements within.
That is, random access can result in $N$ I/Os. For all realistic values of $N$, $M$, and
$B$, this is several magnitudes larger than both $\scan{N}$ and $\sort{N}$.

\subsection{Binary Decision Diagrams} \label{sec:preliminaries:bdd}

\if\arxiv1
  \begin{figure}[t]
\else
  \begin{figure}[t]
\fi
  \centering

  \subfloat[$x_0$]{
    \label{fig:bdd_example:x0}
    \centering

    \begin{tikzpicture}
      \node[shape = circle,    draw = black] at (0, 1.4) (r) {$x_0$};

      \node[shape = rectangle, draw = black] at (-0.8, 0) (F) {$\bot$};
      \node[shape = rectangle, draw = black] at ( 0.8, 0) (T) {$\top$};

      \draw[->, dashed]
        (r) edge (F)
      ;
      \draw[->]
        (r) edge (T)
      ;
    \end{tikzpicture}
  }
  \quad
  \subfloat[$x_1$]{
    \label{fig:bdd_example:x1}
    \centering

    \begin{tikzpicture}
      \node[shape = circle,    draw = black] at (0, 0.8) (r) {$x_1$};

      \node[shape = rectangle, draw = black] at (-0.8, 0) (F) {$\bot$};
      \node[shape = rectangle, draw = black] at ( 0.8, 0) (T) {$\top$};

      \draw[->, dashed]
        (r) edge (F)
      ;
      \draw[->]
        (r) edge (T)
      ;
    \end{tikzpicture}
  }
  \quad
  \subfloat[$\neg x_1$]{
    \label{fig:bdd_example:-x1}
    \centering

    \begin{tikzpicture}
      \node[shape = circle,    draw = black] at (0, 0.8) (r) {$x_1$};

      \node[shape = rectangle, draw = black] at (-0.8, 0) (F) {$\bot$};
      \node[shape = rectangle, draw = black] at ( 0.8, 0) (T) {$\top$};

      \draw[->, dashed]
        (r) edge (T)
      ;
      \draw[->]
        (r) edge (F)
      ;
    \end{tikzpicture}
  }
  \quad
  \subfloat[$x_0 \land \neg x_1$]{
    \label{fig:bdd_example:x0&-x1}
    \centering

    \begin{tikzpicture}
      \node[shape = circle,    draw = black] at (-0.4, 1.4) (r0) {$x_0$};
      \node[shape = circle,    draw = black] at ( 0.4, 0.8) (r1) {$x_1$};

      \node[shape = rectangle, draw = black] at (-0.8, 0) (F) {$\bot$};
      \node[shape = rectangle, draw = black] at ( 0.8, 0) (T) {$\top$};

      \draw[->, dashed]
        (r0) edge (F)
        (r1) edge (T)
      ;
      \draw[->]
        (r0) edge (r1)
        (r1) edge (F)
      ;
    \end{tikzpicture}
  }

  \caption{Examples of Reduced Ordered Binary Decision Diagrams. Terminals are drawn as boxes
    surrounding their Boolean value. Internal nodes are drawn as circles and contain their decision
    variable. Arcs to the \emph{high} and \emph{low} child are respectively drawn solid and dashed.}
  \label{fig:bdd_example}
\end{figure}

As shown in \cref{fig:bdd_example}, a Binary Decision Diagram~\cite{Bryant1986} (BDD) (based on
\cite{Lee1959,Akers1978}) represents an $n$-ary Boolean function as a singly-rooted directed acyclic
graph (DAG). Each of its two sinks, refered to as \emph{terminals}, contain one of the two Boolean
values, $\B = \{ \top, \bot \}$. These represent the function's output values. An internal BDD node,
$v$, is associated in $\topof{v}$ with a Boolean input variable $x_i$. Furthermore, it has two BDD
nodes as children, $\lowof{v}$ and $\highof{v}$. These three values in $f$ encode the ternary
if-then-else
\if\arxiv1%
  \begin{equation*}
    \topof{v} \ ?\ \highof{v} : \lowof{v} \enspace .
  \end{equation*}

\else%
  $\topof{v} \ ?\ \highof{v} : \lowof{v}$. %
\fi%
What are colloquially referred to as BDDs are in fact \emph{Reduced Ordered} Binary Decision
Diagrams (ROBDDs). An Ordered BDD (OBDD) restricts each variable to occur at most once on each path
from the root to a terminal and to occur according to a certain order, $\pi$. This gives rise to a
levelisation of the OBDD where each level, $\ell$, is associated with an input variable, $x_i$. For
sake of simplicity, we assume that $\pi$ is the identity order. A \emph{Reduced} OBDD further
restricts the DAG such that (1) no nodes are duplicates of another and (2) no node is redundant,
i.e.\ $\highof{v} = \lowof{v}$. Assuming the variable ordering, $\pi$, is fixed, ROBDDs are a unique
canonical form of the Boolean function it represents.

\subsubsection*{Quantification Algorithm}

The levelisation of OBDDs allows the recursive BDD algorithms to both be efficient and elegant. For
example, the \texttt{or} operation works by a product construction of the two input BDDs. Here, each
node of the output BDD simulates, according to $\pi$, the decision(s) taken on the shallowest BDD
node(s) in the product of nodes from the input.

\if\arxiv1
  \begin{figure}[t]
\else
  \begin{figure}[t]
\fi
  \centering

  \begin{lstlisting}
*@{\bf exists}@*($v$, $X$)
  if $v = \bot \lor v = \top$
      return $v$
  exi0 := *@{\bf exists}@*($\lowof{v}$, $X$)
  exi1 := *@{\bf exists}@*($\highof{v}$, $X$)
  if $\topof{v} \not\in X$
      return Node { $\topof{v}$, exi0, exi1 }
  return *@{\bf or}@*(exi0, exi1)
  \end{lstlisting}

  \caption{A recursive multi-variable {\bf exists} operation.}
  \label{fig:exists}
\end{figure}

Since $(\exists x : \phi) \equiv \phi[\top / x] \lor \phi[\bot / x]$, the \texttt{or} operation can
be used as the basis for an existential quantification ($\exists$) for a set of input variables,
$X = \{ x_i, x_j, \dots, x_k \}$. As shown in \cref{fig:exists}, if $v$ is a terminal then this
(sub)BDD depends on none of the to be quantified variables. Otherwise, both its children are
resolved recursively into intermediate results, \texttt{exi0} and \texttt{exi1}. If the decision
variable of the root, $\topof{v}$, should not be quantified, a new node with variable $\topof{v}$ is
created from the two recursive results. Otherwise, \texttt{exi0} and \texttt{exi1} are instead
combined (recursively once more) with a nested \texttt{or} operation.

Similarly, one can implement a universal quantification ($\forall$) by use of a nested \texttt{and}
operation. For clarity, our contributions in \cref{sec:theory} are only phrased with respect to the
\texttt{exists} operation. But, everything that follows also applies to \texttt{forall} by replacing
\texttt{or} with \texttt{and}.

\if\arxiv1%
  \subsubsection*{Relational Product}

  The \emph{relational product} computes the set of states after taking a step in a transition
  system with the formula $\exists \vec{x} : S(\vec{x}) \wedge R(\vec{x},\vec{x'})$. Hence, the
  support for a multi-variable quantification operation is key for the application of BDDs in the
  context of symbolic model checking.

\fi%

\subsection{I/O-efficient BDD Manipulation} \label{sec:preliminaries:adiar}

\if\arxiv1
  \begin{figure}[b]
\else
  \begin{figure}[b]
\fi
  \centering

  \subfloat[Node-based representation of $x_0 \land \neg x_1$ (\cref{fig:bdd_example:x0&-x1}).]{
    \label{fig:bdd_example:representation:node}
    \centering

    \qquad\qquad
    $\left[
      \
      \{(0,0), \bot, (1,0)\};
      \
      \{(1,0), \top, \bot\}
      \
    \right]$
    \qquad\qquad
  }

  \vspace{10pt}

  \subfloat[Arc-based representation of the \texttt{or} of $x_0 \land \neg x_1$
  (\cref{fig:bdd_example:x0&-x1}) and $x_1$ (\cref{fig:bdd_example:x1}).]{
    \label{fig:bdd_example:representation:arc}
    \centering

    $\left[
      \
      \arc[solid]{(0,0)}{(1,0)};
      \
      \arc[dashed]{(0,0)}{(1,1)};
      \
      \arc[dashed]{(1,0)}{\bot};
      \
      \arc[solid]{(1,0)}{\top};
      \
      \arc[dashed]{(1,1)}{\top};
      \
      \arc[solid]{(1,1)}{\top};
      \
    \right]$
  }

  \caption{BDD Representations in Adiar.}
  \label{fig:bdd_example:representation}
\end{figure}

The Adiar~\cite{Soelvsten2022:TACAS} BDD package builds on top of Lars Arge's ideas
\cite{Arge1995:1,Arge1996} on how to improve the I/O complexity of BDD manipulation. To not
introduce random access, Adiar does not use any hash tables nor recursion for its BDD manipulation.
As a result, different BDD objects do not share common subtrees in Adiar. For the same reason, it
neither uses pointers to traverse its BDDs. Instead, every BDD node $v$ is uniquely identified by a
pair $(\topof{v}, v.\texttt{id})$ where $v.\texttt{id}$ is $v$'s index on level $\topof{v}$.
Lexicographically, this \emph{unique identifier} (\texttt{uid}) imposes a total ordering of all BDD
nodes such that they follow the variable ordering. Yet, the \texttt{uid} does not specify the exact
index where one can find the BDD node.
For example, the BDD for $x_0 \land \neg x_1$ in \cref{fig:bdd_example:x0&-x1} is represented in
Adiar as the list of nodes in \cref{fig:bdd_example:representation:node}: every node is a 3-tuple
with its \texttt{uid} followed by the unique identifier of its low and its high children.

\if\arxiv1
  \begin{figure}[t]
\else
  \begin{figure}[t]
\fi
  \centering

  \begin{tikzpicture}[every text node part/.style={align=center}]
    \draw (0,0) rectangle ++(2,1)
    node[pos=.5]{\texttt{Apply}};
    \draw (4.5,0) rectangle ++(2,1)
    node[pos=.5]{\texttt{Reduce}};

    \draw[->] (-0.5,0.8) -- ++(0.5,0)
    node[pos=-1.3]{$f$ \texttt{nodes}};
    \draw[->] (-0.5,0.2) -- ++(0.5,0)
    node[pos=-1.3]{$g$ \texttt{nodes}};

    \draw[->,dashed] (2,0.5) -- ++(2.5,0)
    node[pos=0.5,above]{\small $f \lor g$ \texttt{arcs}};

    \draw[->] (6.5,0.5) -- ++(0.5,0)
    node[pos=2.9]{$f \lor g$ \texttt{nodes}};
  \end{tikzpicture}

  \caption{The Apply--Reduce pipeline of \texttt{or} in Adiar.}
  \label{fig:tandem}
\end{figure}
\if\arxiv1
  \begin{figure}[!t]
\else
  \begin{figure}[t]
\fi
  \centering

  \subfloat[Apply~($x_0$)]{
    \label{fig:or:example:apply:x0}
    \centering

    \begin{tikzpicture}
      \draw[red, dashed] (-0.75, 1.4) -- ++(3.1,0);

      \node[shape = circle,    draw = lightgray, lightgray, fill=white] at ( 0.8, 1.4) (n1) {$x_0$};

      \node[] at ( 0.0, 0.8) (n2) {};
      \node[] at ( 1.6, 0.8) (n3) {};

      \node[shape = rectangle] at ( 0.0, 0) (F) {\phantom{$\bot$}};
      \node[shape = rectangle] at ( 1.6, 0) (T) {\phantom{$\top$}};

      \draw[<-, dashed, lightgray]
        (n1) edge (n2)
      ;
      \draw[<-, lightgray]
        (n1) edge (n3)
      ;
    \end{tikzpicture}
  }
  \subfloat[Apply~($x_1$)]{
    \label{fig:or:example:apply:x1}
    \centering

    \begin{tikzpicture}
      \draw[red, dashed] (-0.75, 0.8) -- ++(3.1,0);

      \node[shape = circle,    draw = black] at ( 0.8, 1.4) (n1) {$x_0$};

      \node[shape = circle,    draw = lightgray, lightgray, fill=white] at ( 0.0, 0.8) (n2) {$x_1$};
      \node[shape = circle,    draw = lightgray, lightgray, fill=white] at ( 1.6, 0.8) (n3) {$x_1$};

      \node[shape = rectangle] at ( 0.0, 0) (F) {\phantom{$\bot$}};
      \node[shape = rectangle] at ( 1.6, 0) (T) {\phantom{$\top$}};

      \draw[<-, dashed]
        (n1) edge (n2)
      ;
      \draw[<-, dashed, lightgray]
        (n2) edge (F)
        (n3) edge[bend right] (T)
      ;
      \draw[<-]
        (n1) edge (n3)
      ;
      \draw[<-, lightgray]
        (n2) edge (T)
        (n3) edge[bend left] (T)
      ;
    \end{tikzpicture}
  }
  \subfloat[Apply (leaves)]{
    \label{fig:or:example:apply:terms}
    \centering

    \begin{tikzpicture}
      \draw[red, dashed] (-0.75, 0.0) -- ++(3.1,0);

      \node[shape = circle,    draw = black] at ( 0.8, 1.4) (n1) {$x_0$};

      \node[shape = circle,    draw = black] at ( 0.0, 0.8) (n2) {$x_1$};
      \node[shape = circle,    draw = black] at ( 1.6, 0.8) (n3) {$x_1$};

      \node[shape = rectangle, draw = black, fill=white] at ( 0.0, 0) (F) {$\bot$};
      \node[shape = rectangle, draw = black, fill=white] at ( 1.6, 0) (T) {$\top$};

      \draw[<-, dashed]
        (n1) edge (n2)
        (n2) edge (F)
        (n3) edge[bend right] (T)
      ;
      \draw[<-]
        (n1) edge (n3)
        (n2) edge (T)
        (n3) edge[bend left] (T)
      ;
    \end{tikzpicture}
  }
  \\ \bigskip
  \subfloat[Reduce~(leaves)]{
    \label{fig:or:example:reduce:terms}
    \centering

    \begin{tikzpicture}
      \draw[red, dashed] (-0.75, 0.0) -- ++(3.1,0);

      \node[shape = circle,    draw = lightgray, lightgray] at ( 0.0, 0.8) (n2) {$x_1$};
      \node[shape = circle,    draw = lightgray, lightgray] at ( 1.6, 0.8) (n3) {$x_1$};

      \node[shape = rectangle, draw = black, fill=white] at ( 0.0, 0) (F) {$\bot$};
      \node[shape = rectangle, draw = black, fill=white] at ( 1.6, 0) (T) {$\top$};

      \draw[->, dashed, lightgray]
        (n2) edge (F)
        (n3) edge[bend right] (T)
      ;
      \draw[->, lightgray]
        (n2) edge (T)
        (n3) edge[bend left] (T)
      ;
    \end{tikzpicture}
  }
  \subfloat[Reduce~($x_1$)]{
    \label{fig:or:example:reduce:x1}
    \centering

    \begin{tikzpicture}
      \draw[red, dashed] (-0.75, 0.8) -- ++(3.1,0);

      \node[shape = circle,    draw = lightgray, lightgray] at ( 0.8, 1.4) (n1) {$x_0$};
      \node[shape = circle,    draw = black, fill=white] at ( 0.0, 0.8) (n2) {$x_1$};

      \node[shape = rectangle, draw = black] at ( 0.0, 0) (F) {$\bot$};
      \node[shape = rectangle, draw = black] at ( 1.6, 0) (T) {$\top$};

      \draw[->, dashed, lightgray]
        (n1) edge (n2)
      ;
      \draw[->, dashed]
        (n2) edge (F)
      ;
      \draw[->, lightgray]
        (n1) edge (T)
      ;
      \draw[->]
        (n2) edge (T)
      ;
    \end{tikzpicture}
  }
  \subfloat[Reduce~($x_0$)]{
    \label{fig:or:example:reduce:x0}
    \centering

    \begin{tikzpicture}
      \draw[red, dashed] (-0.75, 1.4) -- ++(3.1,0);

      \node[shape = circle,    draw = black, fill=white] at ( 0.8, 1.4) (n1) {$x_0$};
      \node[shape = circle,    draw = black] at ( 0.0, 0.8) (n2) {$x_1$};

      \node[shape = rectangle, draw = black] at ( 0.0, 0) (F) {$\bot$};
      \node[shape = rectangle, draw = black] at ( 1.6, 0) (T) {$\top$};

      \draw[->, dashed]
        (n1) edge (n2)
        (n2) edge (F)
      ;
      \draw[->]
        (n1) edge (T)
        (n2) edge (T)
      ;
    \end{tikzpicture}
  }

  \caption{Step-by-step example of the \texttt{or} of $x_1$ (\cref{fig:bdd_example:x1}) and
    $x_0 \land \neg x_1$ (\cref{fig:bdd_example:x0&-x1}) with time-forward processing; subfigures
    show the state after processing each level. Arcs in gray are pushed to the algorithm's priority
    queue whereas the ones in black have been written to the output file.}
  \label{fig:or:example}
\end{figure}

As depicted in \cref{fig:tandem,fig:or:example}, the previous BDD operations in Adiar, such as
\texttt{or}, process a BDD with two sweeps. Both sweeps use \emph{time-forward
  processing}~\cite{Chiang1995,Arge1995:2} to achieve their I/O-efficiency: computation is deferred
with one or more priority queues until all relevant data has been read. During the first sweep, the
\emph{Apply}\footnote{Similar to \cite{Sanghavi1996}, we refer to all top-down manipulating sweeps
  as Apply, e.g.\ \texttt{not}, \texttt{or}, and \texttt{if-then-else}. This even includes
  \texttt{identity} which merely reverses the edges.}, the entire recursion tree is unfolded
top-down. Here, the priority queues also double as a computation cache~\cite{Brace1990,Minato1990}
by merging separate paths to the same recursion target. Hence, the resulting output is in fact not a
tree but a DAG. Yet, it is only an OBDD and needs to be reduced. To do so, Adiar uses an
I/O-efficient variant of the original bottom-up Reduce algorithm by
Bryant~\cite{Bryant1986,Arge1995:2}. Here, a priority queue is used to forward the \texttt{uid} of
reduced nodes $t'$ in the final ROBDD to their to be reduced parents $s$ in the intermediate OBDD.
Yet, to know the parents $s$, the Reduce needs the intermediate OBDD to be transposed, i.e.\ the
DAG's edges to be reversed. Luckily, the Apply sweep outputs its arcs (directed edges) sorted by
their target\footnote{In \cite{Soelvsten2022:TACAS}, arcs to terminals are actually output to a
  separate file sorted by their source. This is merely to improve performance. Hence, we will ignore
  this detail.}. This effectively transposes the OBDD and so no extra work is needed for the Reduce
\cite{Arge1995:1,Soelvsten2022:TACAS}. For example, the \texttt{or} of
\cref{fig:bdd_example:x1,fig:bdd_example:x0&-x1} creates the unreduced BDD in
\cref{fig:or:example:apply:terms} with the arc-based representation in
\cref{fig:bdd_example:representation:arc}.

The I/O and time complexity of this Apply--Reduce tandem is
\if\arxiv1%
  \begin{equation*}
    \Oh{\sort{N + T}} \enspace,
  \end{equation*}
\else%
  $\Oh{\sort{N + T}}$, %
\fi%
where $N$ is the size of the input(s) and $T$ is the size of the unreduced output of the Apply sweep
\cite{Soelvsten2022:TACAS}. To catch up with conventional implementation's performance, major %
efforts have been dedicated to improve on this foundation.

\subsubsection*{Levelised Cuts \cite{Soelvsten2023:ATVA}}

The arcs placed in the above-mentioned priority queues correspond to graph cuts in the (R)OBDDs. In
particular, these cuts correspond to the border between the processed and to be processed nodes.
Since the priority queues induce a levelised processing order, the shape of these cuts match the
(R)OBDD's levels.

This ensures that the maximum size of the priority queues is bounded by the
maximum levelised cut in the input. Hence, sound upper bounds on these cuts can in
turn be used to determine a priori whether a faster internal-memory priority queue can be used.


In practice, this improves performance for smaller and moderate instances.

\subsubsection*{Levelised Random Access \cite{Soelvsten2024:SPIN}}

Orthogonally, an Apply sweep for a product construction, e.g.\ an \texttt{or}, can be simplified if one
of its inputs is narrow, i.e.\ each level fits into internal memory. In this case, one can load each
level in its entirety into internal memory. This provides random access to all of its nodes on
said level, making one of two priority queues within the Apply (as in~\cite{Soelvsten2022:TACAS}) obsolete.

In practice, this improves performance for larger instances.

\section{I/O-efficient Multi-variable Quantification} \label{sec:theory}

\if\arxiv1
  \begin{figure}[b]
\else
  \begin{figure}[b]
\fi
  \centering

  \begin{tikzpicture}[every text node part/.style={align=center}]
    \draw (0,0) rectangle ++(2,1)
    node[pos=.5]{\texttt{Apply}\\{\small $\PQ{outer}{\downarrow}$}};
    \draw (3.5,0) rectangle ++(2,1)
    node[pos=.5]{\texttt{Reduce}\\{\small $\PQ{outer}{\uparrow}$}};

    \draw[->] (-0.5,0.5) -- ++(0.5,0)
    node[pos=-0.55]{$f$};

    \draw[->, dashed] (2,0.5) -- ++(1.5,0)
    node[pos=0.5, above]{\small $\FILE{outer}$};

    \draw[->] (5.5,0.5) -- ++(3,0)
    node[pos=0.22, above]{\small $\FILE{outer}'$}
    node[pos=1.23]{$\exists \vec{x} : f$};

    \draw (7,-2) rectangle ++(2,1)
    node[pos=.5]{\texttt{Apply}\\{\small $\PQ{inner}{\downarrow}$}};
    \draw (3.5,-2) rectangle ++(2,1)
    node[pos=.5]{\texttt{Reduce}\\{\small $\PQ{inner}{\uparrow}$}};

    \draw[->] (7.5,0.5) -- ++(0,-1.5);

    \draw[->, dashed] (7,-1.5) -- ++(-1.5,0)
    node[pos=0.5, above]{\small $\FILE{inner}$};

    \draw[->] (4.5,-1) -- ++(0,1)
    node[pos=0.5, right]{\small $\FILE{inner}'$};
  \end{tikzpicture}

  \caption{The Apply--Reduce pipeline of \texttt{exists} with Nested Sweeping. $F_{*}$ are files
    created by the respective sweep while $\PQ{*}{*}$ are the priority queues they each use.}
  \label{fig:nested_tandem}
\end{figure}

\if\arxiv1
  \begin{figure}[t]
\else
  \begin{figure}[t]
\fi
  \centering

  \begin{tikzpicture}[every text node part/.style={align=center}]
    \node at (-0.5,-2.0) {$x_j$};
    \node at (-0.5,-1.0) {$x_i$};

    \draw (0,-1.0) rectangle ++(1.6,1)
    node[pos=.5]{\texttt{Apply}\\{\small $\PQ{outer}{\downarrow}$}};

    \draw[->, dashed] (0.8,-1.1) -- ++(0,-2.4);

    \draw (1.2,-4.0) rectangle ++(1.6,1)
    node[pos=.5]{\texttt{Reduce}\\{\small $\PQ{outer}{\uparrow}$}};

    \draw[->] (2.0,-2.9) -- ++(0,0.8);
    \draw[->] (2.0,-2.0) -- ++(0,0.9);
    \draw[->] (2.0,-1.0) -- ++(0,0.9);

    \draw (3.4,-2.6) rectangle ++(1.6,1)
    node[pos=.5]{\texttt{Apply}\\{\small $\PQ{inner}{\downarrow}$}};

    \draw[->, dashed] (4.2,-2.7) -- ++(0,-0.8);

    \draw (4.6,-4.0) rectangle ++(1.6,1)
    node[pos=.5]{\texttt{Reduce}\\{\small $\PQ{inner}{\uparrow}$}};

    \draw[->] (5.4,-2.9) -- ++(0,0.9);

    \draw (6.8,-1.6) rectangle ++(1.6,1)
    node[pos=.5]{\texttt{Apply}\\{\small $\PQ{inner}{\downarrow}$}};

    \draw[->, dashed] (7.6,-1.7) -- ++(0,-1.8);

    \draw (8.0,-4.0) rectangle ++(1.6,1)
    node[pos=.5]{\texttt{Reduce}\\{\small $\PQ{inner}{\uparrow}$}};

    \draw[->] (8.8,-2.9) -- ++(0,1.9);

    \draw[-, densely dotted] (0.9,-3.5) -- ++(0.3,0);
    \draw[-, densely dotted] (2.1,-2.1) -- ++(1.3,0);
    \draw[-, densely dotted] (4.3,-3.5) -- ++(0.3,0);
    \draw[-, densely dotted] (5.4,-1.9) -- ++(0,0.4) -- ++(-2.1,0) -- ++(0,-0.5) -- ++(-1.3,0);
    \draw[-, densely dotted] (2.1,-1.1) -- ++(4.7,0);
    \draw[-, densely dotted] (7.7,-3.5) -- ++(0.3,0);
    \draw[-, densely dotted] (8.8,-0.9) -- ++(0,0.4) -- ++(-2.1,0) -- ++(0,-0.5) -- ++(-4.7,0);
  \end{tikzpicture}

  \caption{Sweep direction (solid/dashed) and control-flow (dotted) of Nested Sweeping. The y-axis
    corresponds to the levels within the BDD. Variables $x_i$ and $x_j$ induce nested sweeps; for
    \texttt{exists}, they are to be quantified variables.}
  \label{fig:nested_sweeps}
\end{figure}

\if\arxiv1
  \begin{figure}[t]
\else
  \begin{figure}[!b]
\fi
  \centering

  \begin{tikzpicture}
    \draw (-5.5,0) -- (-4.5,4) -- (-3.5,0) -- cycle;

    \node[inner sep=0pt] at (-4.5,3.3) (l1) {$\circ$};

    \node[inner sep=0pt] at (-4.6,2.5) (l2) {$\circ$};

    \draw[->] (l1) -- (l2);

    \node[inner sep=0pt] at (-4.3,1.0) (l3) {$\circ$};

    \draw[->] (l1) edge[bend left=10] (l3);
    \draw[->] (l2) edge[bend right=20] (l3);

    \draw[->] (l3) -- ++(-0.2,-0.5);
    \draw[->] (l3) -- ++( 0.2,-0.5);

    \node[inner sep=0pt] at (-4.9,1.0) (l4) {$\circ$};

    \draw[->] (l2) edge[bend right=10] (l4);

    \draw[->] (l4) -- ++(-0.2,-0.5);
    \draw[->] (l4) -- ++( 0.2,-0.5);

    \node at (-3.25,2) {$\odot$};

    \draw (-3,0) -- (-2,4) -- (-1,0) -- cycle;

    \node[inner sep=0pt] at (-2,3.3) (r1) {$\circ$};

    \node[inner sep=0pt] at (-2.25,1.8) (r2) {$\circ$};

    \draw[->] (r1) -- (r2);

    \node[inner sep=0pt] at (-1.65,1.8) (r3) {$\circ$};

    \draw[->] (r1) -- (r3);

    \node[inner sep=0pt] at (-2.5,1.0) (r4) {$\circ$};

    \draw[->] (r2) -- (r4);

    \draw[->] (r4) -- ++(-0.2,-0.5);
    \draw[->] (r4) -- ++( 0.2,-0.5);

    \node[inner sep=0pt] at (-2.0,1.0) (r5) {$\circ$};

    \draw[->] (r2) -- (r5);
    \draw[->] (r3) -- (r5);

    \draw[->] (r5) -- ++(-0.2,-0.5);
    \draw[->] (r5) -- ++( 0.2,-0.5);

    \node[inner sep=0pt] at (-1.5,1.0) (r6) {$\circ$};

    \draw[->] (r3) -- (r6);

    \draw[->] (r6) -- ++(-0.2,-0.5);
    \draw[->] (r6) -- ++( 0.2,-0.5);

    \node at (-0.5,2) {$\Rightarrow$};

    \input{tikz/nested_cases__tree.tex}

    \node[inner sep=0pt] at (3.0,3.5)  (n1) {$\circ$};

    \node[inner sep=0pt] at (2.3,2.3)  (n2) {$\circ$};
    \draw[<-] (n1) -- (n2);

    \node[inner sep=0pt] at (1.9,1.0) (n3) {$\circ$};
    \draw[<-] (n2) -- (n3);

    \draw[<-] (n3) -- ++(-0.3,-0.5);
    \draw[<-] (n3) -- ++(0.3,-0.5);

    \node[inner sep=0pt] at (3.2,1.68) (n4) {$\circ$};
    \draw[<-] (n2) -- (n4);

    \draw[<-] (n4) -- ++(-0.3,-0.5);

    \node[inner sep=0pt] at (4.3,1.68) (n5) {$\circ$};
    \draw[<-] (n1) -- (n5);

    \node[inner sep=0pt] at (3.6,1.0) (n6) {$\circ$};
    \draw[<-] (n5) -- (n6);
    \draw[<-] (n4) -- (n6);

    \draw[<-] (n6) -- ++(-0.3,-0.5);
    \draw[<-] (n6) -- ++(0.3,-0.5);

    \node[inner sep=0pt] at (4.6,1.0) (n7) {$\circ$};
    \draw[<-] (n5) -- (n7);

    \draw[<-] (n7) -- ++(-0.3,-0.5);
    \draw[<-] (n7) -- ++(0.3,-0.5);
  \end{tikzpicture}

  \caption{Outer Apply one (or more) input BDD(s) are processed in a top-down sweep to create a
    single transposed BDD.}
  \label{fig:nested_cases:transpose}
\end{figure}

\input{fig/nested_cases__example.tex}

Our previous work in \cite{Soelvsten2022:TACAS} only covers simple BDD operations without any
data-dependencies in its recursion, e.g.\ the \texttt{or}. Yet, this does not cover the
\texttt{exists} in \cref{fig:exists}, where the nested call to \texttt{or} on line 8 depends on the
recursions from lines 4 and 5.

To address this, we introduce the \emph{nested sweeping} framework. As shown in
\cref{fig:nested_tandem,fig:nested_sweeps}, nested sweeping wraps the algorithm(s) depicted in
\cref{fig:tandem}: after transposing the input in an initial Apply sweep, a single \emph{outer}
Reduce sweep accumulates the result of multiple \emph{inner} Apply--Reduce sweeps. For
\texttt{exists}, the inner Apply is an \texttt{or} sweep.

More precisely, nested sweeping consists of the four phases described below and depicted
step-by-step in \cref{fig:nested_cases:example}.

\paragraph{Outer Apply:}

As shown in \cref{fig:nested_cases:transpose}, inputs are combined (and possibly manipulated) in an
Apply sweep into a single file, \FILE{outer}. This transposes and merges the inputs such that they
are of the form needed by the Reduce of \cite{Soelvsten2022:TACAS}.

In the case of \texttt{exists}, this is merely a simple transposition of $f$. In
\if\arxiv1%
  \cref{sec:theory:quantify optimisations}%
\else%
  the full paper~\cite{Soelvsten2024:arXiv}%
\fi%
, we explore ways in which this phase can also do double duty for partially resolving the
quantification computation.
\if\arxiv1%
  In the case of the relational product, the conjunction of states and relation can be computed as
  part of this phase.

\fi%

\paragraph{Outer Reduce:}

As in \cite{Soelvsten2022:TACAS}, each level of \FILE{outer} is reduced bottom-up by having a
priority queue, \PQ{outer}{\uparrow}, forward the information about reduced nodes, $t'$, to their
unreduced parents, $s$. The reduced output is pushed into a new file, $\FILE{outer}'$.

Let $x_j$ be the next level that needs a nested sweep. For \texttt{exists}, $x_j$ is the largest
still to be quantified variable in $X$. As visualised in \cref{fig:nested_cases:outer}, the logic
of \cite{Soelvsten2022:TACAS} is extended as follows:
\begin{enumerate}
\item \label{nested:outer:recurse}

  If the current level is $x_j$, each arc $\arc{s}{t'}$ to a reduced node $t'$ at this level is
  turned into a request and placed in a second priority queue,
  \PQ{inner}{\downarrow}.

  For \texttt{exists}, the requests are of the form $\arc{s}{(\lowof{t'},\highof{t'})}$.

\item \label{nested:outer:reduce}

  If the current level is deeper than $x_j$, nodes are reduced as in \cite{Soelvsten2022:TACAS}
  with a caveat: whether the arc $\arc{s}{t'}$ to the reduced node $t'$ is placed
  in \PQ{outer}{\uparrow} or in \PQ{inner}{\downarrow} depends on the level of the unreduced
  parent $s$ as follows:
  \begin{enumerate}
  \item \label{nested:outer:reduce:outer}

    If $x_j \leq \topof{s}$, i.e.\ $s$ is as deep or deeper than level $x_j$, then
    $\arc{s}{t'}$ is placed in $\PQ{outer}{\uparrow}$ as normal.

  \item \label{nested:outer:reduce:inner}

    Otherwise, i.e.\ if $\topof{s} < x_j$, $\arc{s}{t'}$ is placed in \PQ{inner}{\downarrow}
    instead.
  \end{enumerate}
\end{enumerate}
For \texttt{exists}, Case~\ref{nested:outer:recurse} matches the invocation of \texttt{or} on line
8 of \cref{fig:exists} whereas \ref{nested:outer:reduce} is the return with an unquantified
variable on line 7.

When level $x_j$ has finished processing, \PQ{inner}{\downarrow} is populated with all requests that
span across level $x_j$. Now, the inner Apply sweep is invoked.

\if\arxiv1
  \begin{figure}[t]
\else
  \begin{figure}[t]
\fi
  \centering

  \begin{tikzpicture}
    \input{tikz/nested_cases__tree.tex}
    \input{tikz/nested_cases__xj.tex}

    \node[inner sep=1pt] at (2.8,0.4) (t1) {$t'$};

    \node[inner sep=1pt] at (2.4,1.3) (s1) {$s$};
    \draw[<-, thick, solid] (t1) -- (s1);

    \node[inner sep=1pt] at (3.3,1.67) (s2) {$s$};
    \draw[<-, thick, solid] (t1) -- (s2);

    \node[inner sep=1pt] at (2.3,2.3) (s3) {$s$};

    \node[inner sep=1pt] at (1.8,1.0) (t3) {$t'$};
    \draw[<-, thick, dotted] (t3) -- (s3);

    \node[inner sep=1pt] at (3.0,3.4) (s4) {$s$};
    \node[inner sep=1pt, red] at (4.3,1.68) (t4) {$t'$};
    \draw[<-, thick, dotted] (t4) -- (s4);

    \node[inner sep=0pt] at (3.6,1.0) (t4-1) {$\circ$};
    \draw[-, dotted, thick] (t4-1) -- (t4);

    \draw[->, solid] (t4-1) -- ++(0.3,-0.5);
    \draw[->, solid] (t4-1) -- ++(-0.3,-0.5);

    \node[inner sep=0pt] at (4.6,1.0) (t4-2) {$\circ$};
    \draw[-, dotted, thick] (t4-2) -- (t4);

    \draw[->, solid] (t4-2) -- ++(0.3,-0.5);
    \draw[->, solid] (t4-2) -- ++(-0.3,-0.5);
  \end{tikzpicture}

  \caption{Outer Reduce: solid arcs stay in \PQ{outer}{\uparrow}
    (Case~\ref{nested:outer:reduce:outer}) while dotted arcs are turned into requests for
    \PQ{inner}{\downarrow} (Cases \ref{nested:outer:reduce:inner} on the left and
    \ref{nested:outer:recurse} on the right). }
  \label{fig:nested_cases:outer}
\end{figure}

\if\arxiv1
  \begin{figure}[t]
\else
  \begin{figure}[t]
\fi
  \centering

  \subfloat[Inner Apply: starting with multiple root requests (dotted), $\FILE{inner}$ is
  constructed with the to-be preserved subtrees (left) together with new nodes that are products of
  previous ones (right).] {
    \label{fig:nested_cases:inner:apply}

    \begin{tikzpicture}[scale=0.9]
      \input{tikz/nested_cases__tree.tex}
      \input{tikz/nested_cases__xj.tex}
      \input{tikz/nested_cases__xi.tex}

      \draw[very thick, dotted, pattern=horizontal lines, pattern color=black!12!white]
      (0,0) -- (1.13,1.5) -- (4.87,1.5) -- (6,0) -- cycle;

      \node[inner sep=0pt] at (2.3,2.3) (s1) {$\circ$};

      \node[draw=black, shape=circle, inner sep=0pt] at (1.9,1.0) (t1) {$\circ$};
      \draw[<-, thick, dotted] (s1) -- (t1);

      \draw[<-, solid] (t1) -- ++(0.3,-0.5);
      \draw[<-, solid] (t1) -- ++(-0.3,-0.5);

      \node[inner sep=0pt] at (3.0,3.5) (s2) {$\circ$};

      \node[inner sep=0pt] at (4.3,1.68) (t2) {$\cdot$};
      \draw[<-, thick, dotted] (s2) -- (t2);

      \node[draw=black, shape=circle, inner sep=0pt] at (4.3,1.0) (t2-1) {$\circ \circ$};
      \draw[<-, solid] (t2-1) -- ++(0.3,-0.5);
      \draw[<-, solid] (t2-1) -- ++(-0.3,-0.5);

      \draw[-, thick, dotted] (t2) -- (t2-1);
    \end{tikzpicture}
  }%
  \quad%
  \subfloat[Inner Reduce: arcs below $x_j$ stay within the inner sweep
  (Case~\ref{nested:inner:current}, solid). Arcs that cross $x_j$ are given back to the outer
  (Case~\ref{nested:inner:parent}, dotted) or to the next inner sweep (Case~\ref{nested:inner:next},
  dash dotted).] {
    \label{fig:nested_cases:inner:reduce}

    \begin{tikzpicture}[scale=0.9]
      \input{tikz/nested_cases__tree.tex}
      \input{tikz/nested_cases__xj.tex}
      \input{tikz/nested_cases__xi.tex}

      \draw[very thick, dotted, pattern=horizontal lines, pattern color=black!12!white]
      (0,0) -- (1.13,1.5) -- (4.87,1.5) -- (6,0) -- cycle;

      \node at (2.3,2.3) (s2) {$s$};

      \node at (1.9,1.0) (t2) {$t''$};
      \draw[<-, thick, dotted] (t2) -- (s2);

      \node at (2.6,1.2) (s1) {$s$};

      \node at (3.5,0.4) (t1) {$t''$};
      \draw[<-, thick, solid] (t1) -- (s1);

      \node at (3.0,3.5) (s3) {$s$};

      \node at (4.5,1.0) (t3) {$t''$};
      \draw[<-, thick, loosely dash dot] (t3) -- (s3);
    \end{tikzpicture}
  }

  \caption{Visualization of the Inner Apply and the Inner Reduce.}
  \label{fig:nested_cases:inner}
\end{figure}

\paragraph{Inner Apply:}

As depicted in \cref{fig:nested_cases:inner:apply}, starting with the requests in
\PQ{inner}{\downarrow}, the reduced nodes, $t'$, placed in $\FILE{outer}'$ by the outer Reduce
sweep, are processed with an Apply sweep from \cite{Soelvsten2022:TACAS}. The intermediate unreduced
result is placed in a new file, \FILE{inner}.

For \texttt{exists}, this sweep is the execution of the \texttt{or} on line 8 of \cref{fig:exists}.
While the algorithms in \cite{Soelvsten2022:TACAS} were only applied to a single BDD, they can also
be applied to an entire forest; this merely requires prepopulating their priority queue with
recursion requests to each root. Hence, we can reuse the previous top-down algorithms from
\cite{Soelvsten2022:TACAS} as is.

\paragraph{Inner Reduce:}

After the inner Apply sweep, \FILE{inner} is reduced in another bottom-up Reduce sweep of
\cite{Soelvsten2022:TACAS}. This creates the reduced nodes $t''$ placed in a new file
$\FILE{inner}'$. Let $x_i$ be the next level above $x_j$ that also needs a nested sweep. For
\texttt{exists}, $x_i$ is the largest variable smaller than $x_j$ that also needs to be quantified.
The arc $\arc{s}{t''}$ is placed in a priority queue, \PQ{inner}{\uparrow}, as follows.
\begin{enumerate}
\item \label{nested:inner:current}

  If $x_j < \topof{s}$, i.e.\ the parent $s$ is below level $x_j$, then
  $\arc{s}{t''}$ is forwarded within this inner Reduce sweep's priority queue,
  $\PQ{inner}{\uparrow}$.

\item \label{nested:inner:parent}

  If $\topof{s} \in [x_i, x_j]$, i.e.\ the parent $s$ is between level $x_i$ and $x_j$ then
  $\arc{s}{t''}$ is given back to the outer sweep, $\PQ{outer}{\uparrow}$. This matches
  Case~\ref{nested:outer:reduce:outer} in the outer Reduce.

\item \label{nested:inner:next}

  If $\topof{s} < x_i$, i.e. the parent $s$ is above $x_i$, then $\arc{s}{t''}$ is placed into
  $\PQ{inner}{\downarrow}$ to prepare the next invocation of an inner Apply sweep. This matches
  Case~\ref{nested:outer:reduce:inner} in the outer Reduce.
\end{enumerate}
The three cases above are depicted in \cref{fig:nested_cases:inner:reduce}. For \texttt{exists},
Cases~\ref{nested:inner:parent} and \ref{nested:inner:next} are equivalent to the return from
\texttt{or} back to \texttt{exists}. Case~\ref{nested:inner:current} is equivalent to a return
statement within the \texttt{or}'s own recursion. Case~\ref{nested:inner:next} is needed to match
\ref{nested:outer:reduce:inner} with $x_j$ replaced with $x_i$.

Finally, $\FILE{inner}'$ replaces $\FILE{outer}'$ and control returns to the outer Reduce sweep to
proceed with the levels above $x_j$.

\bigskip

\noindent As shown in \cref{fig:nested_cases:example}, $\FILE{inner}$ from the inner Apply sweep can
be thought of as overlayed on top of $\FILE{outer}$ from the outer Apply sweep; together they
produce a valid (but unreduced) OBDD. The outer and inner Reduce sweeps work together to reduce this
into a single file, $\FILE{outer}'$. When no more levels, $x_j$, need to be processed and the outer
Reduce sweep has finished processing, then $\FILE{outer}'$ contains the final reduced BDD of all
nested operations.

As mentioned above, the inner Apply sweep needs its priority queue $\PQ{inner}{\downarrow}$ to be
prepopulated with all relevant roots. This is done in Cases~\ref{nested:outer:recurse} and
\ref{nested:outer:reduce:inner} in the outer and Case~\ref{nested:inner:next} in the inner Reduce
sweeps.

Since the result of the inner Apply and Reduce sweeps replaces the entire set of nodes in
$\FILE{outer}'$, the priority queue $\PQ{inner}{\downarrow}$ not only needs to be populated with
requests for the nodes that need to be changed but also with requests for the nodes one wishes to
keep (see also \cref{fig:nested_cases:inner:apply,fig:nested_cases:example:outer reduce:x1}). This
makes the inner Apply sweep not only compute the desired result but also act as a mark-and-sweep
garbage collection. On the first glance, these additional non-modifying requests may seem too costly
-- especially if most requests do not modify subtrees. In practice, $33.3\%$ of all requests created
throughout our benchmarks in \cref{sec:experiments} are subtree modifying. For each benchmark
instance, $23.0\%$ of all requests modify subtrees on average (median $35.6\%$). That is, a
reasonable number of all requests (and hence BDD nodes processed) change the subgraph in
$\FILE{outer}'$.

\subsection{Complexity of Nested Sweeping} \label{sec:theory:complexity}

As mentioned in the description of the outer Apply sweep, nested sweeping works for multiple inputs.
In this work, it suffices to assume it only has to deal with a single BDD $f$ of $N$ nodes as also
depicted in \cref{fig:nested_tandem,fig:nested_cases:example}.

\begin{lemma} \label{thm:transpose:complexity}

  A single BDD $f$ with $N$ nodes can be transposed in $\Theta(\sort{N})$ I/Os and time and
  $\Theta(N)$ space.
\end{lemma}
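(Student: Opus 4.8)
The plan is to realise the transposition as exactly the scan-then-sort computation performed by Adiar's initial Apply sweep, and then to read off the three bounds from the two primitives of \cref{sec:preliminaries:io}. Recall that $f$ is stored as a list of its $N$ nodes in levelised (i.e.\ source) order, each node $v$ carrying $\topof{v}$ together with the uids of $\lowof{v}$ and $\highof{v}$; its transposition is the same DAG presented as a list of arcs ordered by their target, as in \cref{fig:bdd_example:representation:arc}. Since every one of the $N$ internal nodes has exactly two outgoing arcs, this arc list has $\Theta(N)$ entries.

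First I would generate the arcs: a single linear scan of the node list emits, for each node $v$, the two arcs $\arc{v}{\lowof{v}}$ and $\arc{v}{\highof{v}}$ as target-then-source records. Reading the $N$ nodes and writing the $\Theta(N)$ arcs costs $\Theta(\scan{N})$ I/Os and $\Theta(N)$ time, while the arc list occupies $\Theta(N)$ space. Second, I would sort these $\Theta(N)$ records by their target with the I/O-efficient sorting primitive; this places the arcs into the by-target order consumed by the Reduce of \cite{Soelvsten2022:TACAS} and costs $\Theta(\sort{N})$ I/Os and time in $\Theta(N)$ space. Because $N > M$ forces $\sort{N} \ge \scan{N}$, the sort dominates the scan, so the two phases together yield the claimed $\Theta(\sort{N})$ I/Os and time and $\Theta(N)$ space as upper bounds.

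For tightness, the space bound is immediate, since the transposed output alone consists of $\Theta(N)$ arcs. The $\Omega(\sort{N})$ half is the only delicate point: the transposed list is globally ordered by target, so it realises a permutation of the $\Theta(N)$ input arcs whose target order is, for a worst-case input, unrelated to their source order, and by the permutation lower bound of Aggarwal and Vitter \cite{Aggarwal1987} this costs $\Omega(\sort{N})$ for all realistic $N$, $M$, and $B$. I expect this lower-bound direction to be the main obstacle, as one must confirm that the levelised BDD structure does not secretly make the required reordering cheaper than an arbitrary permutation; the upper bound itself needs nothing beyond one scan followed by one sort.
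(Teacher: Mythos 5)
Your proposal is correct and follows essentially the same route as the paper's proof: a single $\Theta(\scan{N})$ scan that splits each node $v$ into its two outgoing arcs, followed by a $\Theta(\sort{N})$ sort of the resulting $2N$ arcs on their target. The only addition is your explicit justification of the $\Omega(\sort{N})$ direction via the permutation lower bound of Aggarwal and Vitter, which the paper leaves implicit (its $\Theta$ merely reflects the cost of the sorting primitive itself); this is a harmless and reasonable strengthening.
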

\begin{proof}
  In $\Theta(\scan{N})$ I/Os and time iterate over and split all nodes $v$ in-order into the two
  arcs $\arc[dashed]{\uidof{v}}{\lowof{v}}$ and $\arc[solid]{\uidof{v}}{\highof{v}}$. Sort these
  $2N$ arcs on their target using $\Theta(\sort{N})$ I/Os and time and linear space to transposes
  them.
\end{proof}

\if\arxiv1%

  In \cref{sec:theory:quantify optimisations}, we propose to embed valuable computations inside of
  the outer Apply sweep. This comes at the cost of potentially changing the BDD size. To encapsulate
  such cases too, let $\Nprime$ be the output size of the outer Apply sweep which may exceed
  $\Oh{N}$. Yet, this step is not the bottle-neck of the algorithm.

\fi%

\begin{lemma} \label{thm:outer:complexity}

  Ignoring the work done within the inner sweeps, the outer Reduce sweep costs
  $\Theta(\sort{\Nprime})$ I/Os and time and requires $\Oh{\Nprime}$ space.
\end{lemma}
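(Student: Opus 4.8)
The plan is to reduce the claim to the complexity of the plain Reduce sweep of \cite{Soelvsten2022:TACAS} and then argue that the extra routing logic of Cases~\ref{nested:outer:recurse} and~\ref{nested:outer:reduce} adds no asymptotic overhead. Recall that the Reduce of \cite{Soelvsten2022:TACAS}, run on a transposed arc-based input with $\Theta(M)$ arcs, costs $\Theta(\sort{M})$ I/Os and time and $\Oh{M}$ space: it reads the arcs level by level, sorts each level to detect duplicate and redundant nodes, and uses $\PQ{outer}{\uparrow}$ to forward each reduced node's \texttt{uid} to its (at most two) parents. Since the outer Apply produces $\FILE{outer}$ with $\Theta(\Nprime)$ arcs, the plain Reduce part already fits the claimed bounds, so only the augmentation must be accounted for.

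First I would observe that every step added on top of \cite{Soelvsten2022:TACAS} is a constant-time decision taken per arc $\arc{s}{t'}$: comparing $\topof{s}$ against the active nested level $x_j$ (Cases~\ref{nested:outer:reduce:outer} and~\ref{nested:outer:reduce:inner}) or, at level $x_j$ itself, emitting the request $\arc{s}{(\lowof{t'},\highof{t'})}$ into $\PQ{inner}{\downarrow}$ (Case~\ref{nested:outer:recurse}). None of these alters the per-level sorting; each merely chooses which of the two priority queues receives the arc. Hence the per-level work is identical to \cite{Soelvsten2022:TACAS} up to a constant factor, and it remains to bound the total number of priority-queue operations charged to the outer sweep.

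The hard part will be showing this total stays in $\Oh{\Nprime}$ even though the inner Reduce sweeps hand arcs back to $\PQ{outer}{\uparrow}$ via Case~\ref{nested:inner:parent}. My plan here is a charging argument keyed to the out-arcs of $\FILE{outer}$: each of its nodes has two out-arcs, so there are $\Theta(\Nprime)$ of them, and I will show each one incurs only $\Oh{1}$ outer operations. An out-arc of a fixed source $s$ is touched by the outer sweep in exactly one of two ways. Either $s$ lies on the same side of the active level as its target, in which case it is processed once directly in $\PQ{outer}{\uparrow}$; or it spans a quantified level, in which case the outer sweep pushes it once into $\PQ{inner}{\downarrow}$ (handoff, Case~\ref{nested:outer:reduce:inner}) and later pops it once from $\PQ{outer}{\uparrow}$ when some inner Reduce returns it via Case~\ref{nested:inner:parent}. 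The crucial point is that the fixed level $\topof{s}$ uniquely determines the single pair $[x_i,x_j]$ of consecutive nested levels for which the arc is returned to the outer, and since $s$ is a genuine $\FILE{outer}$ node (the inner sweeps' own product nodes all lie below $x_j$), each out-arc returns to the outer at most once; every intermediate re-routing between successive inner sweeps (Case~\ref{nested:inner:next}) is ``work done within the inner sweeps'' and is explicitly excluded. Summing the $\Oh{1}$ charge over the $\Theta(\Nprime)$ out-arcs yields $\Oh{\Nprime}$ outer priority-queue operations, which by the I/O-efficient priority queue of \cite{Arge1995:2} cost $\Theta(\sort{\Nprime})$ I/Os and time, while the queue never holds more than $\Oh{\Nprime}$ arcs at once, giving the $\Oh{\Nprime}$ space bound. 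The matching lower bound $\Omega(\sort{\Nprime})$ is inherited from the plain Reduce, which must sort each level, completing the $\Theta(\sort{\Nprime})$ estimate.
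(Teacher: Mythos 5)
Your proposal is correct and follows essentially the same route as the paper, which likewise reduces the claim to the complexity of the Reduce algorithm of \cite{Soelvsten2022:TACAS} plus a constant amount of extra routing logic per node for Cases~\ref{nested:outer:recurse} and~\ref{nested:outer:reduce}. Your charging argument for arcs handed back by the inner sweeps via Case~\ref{nested:inner:parent} is a welcome elaboration of a point the paper's one-line proof leaves implicit under the phrase ``ignoring the work done within the inner sweeps.''
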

\begin{proof}
  This follows from the complexity of the Reduce algorithm in \cite{Soelvsten2022:TACAS} (based on
  \cite{Arge1996}) and the constant extra time spent for each of the $\Nprime$ nodes to resolve the
  additional logic in Cases~\ref{nested:outer:recurse} and \ref{nested:outer:reduce} of the outer
  Reduce sweep.
\end{proof}

By combining \cref{thm:transpose:complexity,thm:outer:complexity} together with the fact that the
last invocation of the inner Apply and Reduce sweeps constructs, together with the outer Reduce
sweep, the output of size $T$, we obtain the following lower bound on the complexity of nested
sweeping.

\begin{corollary} \label{thm:nested:complexity-lower}

  Nested Sweeping uses $\Omega(N+T)$ space and $\Omega(\sort{N+T})$ time and I/Os where $N$ and $T$
  are the size of the input and output, respectively.
\end{corollary}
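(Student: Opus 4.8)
The plan is to establish the bound by charging it to two phases that nested sweeping must always execute: the initial transpose, which accounts for the \emph{input side} $\Omega(\sort{N})$, and the final Reduce that materialises the result, which accounts for the \emph{output side} $\Omega(\sort{T})$. Since these two phases run sequentially and are unavoidable, any tight ($\Theta$) cost of either phase is \emph{a fortiori} a lower bound on the cost of the whole pipeline, and their I/O/time costs may be summed. I would then merge the two halves using the super-additivity of $\text{sort}$.

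For the input side, recall that the first phase of nested sweeping is the outer Apply, which by \cref{thm:transpose:complexity} transposes $f$ in $\Theta(\sort{N})$ I/Os and time while storing $\Theta(N)$ nodes. As this phase is always executed and must read all $N$ input nodes, it already forces $\Omega(\sort{N})$ I/Os and time and $\Omega(N)$ space on the entire computation, independently of what the inner sweeps subsequently do.

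For the output side, I would invoke the observation stated just before the corollary: the output of size $T$ is materialised by the last inner Apply and Reduce sweeps acting together with the outer Reduce sweep. By the Reduce complexity of \cite{Soelvsten2022:TACAS} that underlies \cref{thm:outer:complexity}, a Reduce sweep that produces an ROBDD of $T$ nodes costs $\Theta(\sort{T})$ I/Os and time and must hold its $\Omega(T)$ output nodes. Hence constructing the result forces $\Omega(\sort{T})$ I/Os and time and $\Omega(T)$ space.

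Finally I would combine the two halves. For I/Os and time, the sequential phases give $\Omega(\sort{N}) + \Omega(\sort{T})$, and since $\text{sort}$ is monotone with $\sort{2x} = \Theta(\sort{x})$ we have $\sort{N} + \sort{T} = \Theta(\sort{\max(N,T)}) = \Theta(\sort{N+T})$, yielding $\Omega(\sort{N+T})$. For space, peak usage is at least the larger of the two demands, so $\Omega(N) + \Omega(T)$ collapses to $\Omega(\max(N,T)) = \Omega(N+T)$. The only mild obstacle is this last algebraic step: one must be careful to sum I/O costs (sequential phases) but to take the maximum for space (peak memory), and to justify that the logarithmic factor in $\sort{N+T}$ is of the same order as those in $\sort{N}$ and $\sort{T}$ for all realistic values of $M$ and $B$.
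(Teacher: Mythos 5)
Your proposal is correct and matches the paper's own justification, which likewise obtains the $N$-term from the $\Theta(\sort{N})$ transpose of \cref{thm:transpose:complexity} and the $T$-term from the fact that the final inner sweeps together with the outer Reduce must construct the output of size $T$. The extra care you take in merging $\sort{N}+\sort{T}$ into $\sort{N+T}$ and in taking the maximum for peak space is a harmless elaboration of the same argument.
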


In particular for the \texttt{exists} BDD operation, let $T_{j}$ be the size of $\FILE{outer}'$ when
the inner Apply sweep is invoked at level $x_j$.

\begin{lemma} \label{thm:inner or:complexity}

  A single invocation of the inner Apply and Reduce sweeps at $x_j$ costs
  $\Theta(\sort{\Nprime + T_{j}^2})$ I/Os and time and uses $O(\Nprime + T_{j}^2)$ space.
\end{lemma}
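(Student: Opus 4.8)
The plan is to recognise a single inner invocation as one instance of the Apply--Reduce tandem of \cite{Soelvsten2022:TACAS}, whose cost is $\Oh{\sort{N + T}}$ in I/Os and time for an input of $N$ nodes and an unreduced output of $T$ nodes, and then to bound $N$ and $T$ in terms of $\Nprime$ and $T_{j}$. The space bound will fall out of the same accounting, since the priority queues and intermediate files of the tandem never hold more than $\Oh{N+T}$ elements at once.

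First I would settle the input side. The inner sweep starts from the requests sitting in $\PQ{inner}{\downarrow}$, which are the arcs carrying their outer parents $s$ created in Cases~\ref{nested:outer:recurse} and \ref{nested:outer:reduce:inner}; there are at most $\Oh{\Nprime}$ of these, since every such arc stems from the transposed outer structure of $\Nprime$ nodes. Together with reading the reduced nodes placed in $\FILE{outer}'$, of size $T_{j}$ at the moment of invocation, the input to the tandem is $N = \Oh{\Nprime + T_{j}}$, which — since $T_{j} \leq T_{j}^2$ for $T_{j} \geq 1$ — is absorbed into $\Oh{\Nprime + T_{j}^2}$.

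The crux is bounding $T$, the size of the unreduced $\FILE{inner}$. Here the inner Apply is precisely the nested \texttt{or} of line~8 of \cref{fig:exists}, i.e.\ a product construction over sub-BDDs residing in $\FILE{outer}'$. The key step is the standard quadratic product bound: every node emitted by the Apply is determined by the pair of operand nodes it simulates, each operand having at most $T_{j}$ nodes, and the computation cache realised by $\PQ{inner}{\downarrow}$ merges all paths reaching the same pair into a single output node; hence at most $T_{j}^2$ genuinely new product nodes arise. The non-modifying preservation requests merely copy subtrees already present in $\FILE{outer}'$ and so contribute only $\Oh{T_{j}}$ further nodes, giving $T = \Oh{T_{j}^2}$. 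Substituting $N + T = \Oh{\Nprime + T_{j}^2}$ into $\Oh{\sort{N+T}}$ yields the claimed upper bounds, while the matching lower bound implicit in the $\Theta$ comes from the fact that the sweep must at minimum scan its input and emit and sort its output, which in the worst case are respectively $\Omega(\Nprime)$ and $\Omega(T_{j}^2)$ in size.

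The hard part will be making the $T_{j}^2$ product bound fully rigorous in the external-memory setting. Specifically, I must argue that the priority-queue computation cache genuinely collapses every distinct path into a single request per operand pair — so that the node count is bounded by the number of pairs rather than the number of paths — and that the interleaving of the preservation requests with the product requests preserves the level-by-level ordering invariant on which the Apply--Reduce complexity result of \cite{Soelvsten2022:TACAS} depends.
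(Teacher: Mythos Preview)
Your proposal is correct and follows essentially the same approach as the paper: both reduce the cost to counting the arcs handled by the priority queues, splitting them into the $\Oh{\Nprime}$ arcs carried over from $\FILE{outer}$ (your ``root requests'') and the $\Oh{T_j^2}$ arcs generated by the product construction on $\FILE{outer}'$. The paper's proof is considerably terser---it simply cites the priority-queue complexity result and states the two arc counts---so your concerns in the final paragraph about rigorously justifying the computation-cache collapse and the levelisation invariant are already delegated to \cite{Soelvsten2022:TACAS} rather than re-argued here.
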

\begin{proof}
  As in \cite{Soelvsten2022:TACAS}, the algorithm's complexity depends on the number of elements
  placed in the priority queues \cite{Arge1995:2}. In particular, a single nested \texttt{or} sweep
  deals with up to $2 \Nprime$ arcs from $\FILE{outer}$. On top of these, it also processes up to
  $2 T_{j}^2$ arcs created during the product construction of $\FILE{outer}'$.
\end{proof}

Since nested sweeping closely simulates the (parallelised) recursive BDD algorithm in
\cref{fig:exists}, one should expect it achieves, similar to the algorithms in
\cite{Soelvsten2022:TACAS}, major improvements in the number of I/Os at the cost of a $\log$-factor
in the running time when compared to the conventional recursive algorithms. This is indeed the case.

\begin{proposition} \label{thm:multi quantification:complexity-upper}

  Quantification of a set of variables, $X$, is computable with nested sweeping in
  $O(\sort{N^{2^{\lvert X \rvert}}})$ I/Os and time and $O(N^{2^{\lvert X \rvert}})$ space.
\end{proposition}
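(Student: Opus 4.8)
The plan is to track how the size of $\FILE{outer}'$ grows across the $\abs{X}$ nested \texttt{or} sweeps and then sum the per-sweep costs supplied by \cref{thm:inner or:complexity}. Index the quantified variables in the order they are processed by the nested sweeps (deepest first), and let $T^{(i)}$ denote the size of $\FILE{outer}'$ after $i$ of them have been resolved. Since the proposition concerns plain quantification, the outer Apply is a simple transposition, so by \cref{thm:transpose:complexity} we have $\Nprime = \Oh{N}$ and $T^{(0)} = \Oh{N}$.

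First I would establish the doubling recurrence $T^{(i+1)} = \Oh{(T^{(i)})^2}$. The inner sweep quantifying the $(i+1)$-th variable is a product construction (\texttt{or}) over the current $\FILE{outer}'$; as at most one output node arises per reachable pair of nodes of $\FILE{outer}'$, the reduced result has at most $(T^{(i)})^2$ nodes. Treating $\abs{X}$ as a constant, unfolding this recurrence from $T^{(0)} = \Oh{N}$ gives $T^{(i)} = \Oh{N^{2^i}}$, and in particular the final $\FILE{outer}'$ has size $\Oh{N^{2^{\abs{X}}}}$. This already yields the space bound: by \cref{thm:inner or:complexity} a single inner sweep uses $\Oh{\Nprime + (T^{(i)})^2} = \Oh{N^{2^{i+1}}}$ space, and since space is reused rather than accumulated across the sweeps, the peak usage is the maximum over $i$, namely $\Oh{N^{2^{\abs{X}}}}$.

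For the time and I/O bound I would add up the cost of each phase. The transpose costs $\Theta(\sort{N})$ (\cref{thm:transpose:complexity}) and, ignoring the inner sweeps, the outer Reduce costs $\Theta(\sort{\Nprime}) = \Theta(\sort{N})$ (\cref{thm:outer:complexity}). By \cref{thm:inner or:complexity}, the $(i+1)$-th inner sweep costs $\Theta(\sort{\Nprime + (T^{(i)})^2}) = \Oh{\sort{N^{2^{i+1}}}}$, so the total inner cost is $\sum_{i=0}^{\abs{X}-1} \Oh{\sort{N^{2^{i+1}}}}$. Because the argument of $\sort{\cdot}$ grows by a factor of at least $N$ at each step, this doubly-exponential growth makes the final sweep dominate the sum; equivalently, there are only $\abs{X}$ (a constant number of) terms, each bounded by $\sort{N^{2^{\abs{X}}}}$. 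Hence the total is $\Oh{\sort{N^{2^{\abs{X}}}}}$, which also absorbs the transpose and outer-Reduce contributions.

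The main obstacle is the first step: arguing rigorously that one single-variable quantification at most squares the size, i.e.\ justifying $T^{(i+1)} = \Oh{(T^{(i)})^2}$ from the product-construction behaviour of \texttt{or} on the multi-rooted $\FILE{outer}'$. One must also be explicit that $\abs{X}$ is treated as a constant, since the doubly-exponential blow-up pushes a factor depending on $\abs{X}$ into the hidden constants of both the size recurrence and the final sum; once this is settled, the remaining summation and the space argument are routine.
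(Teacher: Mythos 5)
Your proposal is correct and follows essentially the same route as the paper's proof: both start from the transposed input of size $\Oh{N}$ (\cref{thm:transpose:complexity}), observe that each inner Apply--Reduce sweep may at most square the size of $\FILE{outer}'$, and combine this with \cref{thm:outer:complexity,thm:inner or:complexity} to bound the total cost. The paper states this in three sentences where you unfold the recurrence $T^{(i)} = \Oh{N^{2^i}}$ explicitly and flag the dependence of the hidden constants on $\abs{X}$; that is just a more careful writeup of the same argument, not a different one.
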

\begin{proof}
  Due to \cref{thm:transpose:complexity}, $\FILE{outer}$ from the outer Apply sweep has up to $2N$
  arcs. This is also the size of $\FILE{outer}'$ without any inner sweeps. Each inner Apply and
  Reduce sweep may increase the size of $\FILE{outer}'$ quadratically. The result follows from
  \cref{thm:inner or:complexity,thm:transpose:complexity,thm:outer:complexity}.
\end{proof}

Asymptotically, this is not an improvement over just quantifying each variable one-by-one using the
algorithm already proposed in the full version of \cite{Soelvsten2022:TACAS}. Yet, doing so would
involve $2 \lvert X \rvert$ sweeps over \emph{all} levels of the input whereas, as highlighted in
\cref{fig:nested_sweeps,fig:nested_cases:example}, nested sweeping only processes levels below each
quantified variable. This difference is also evident in practice: throughout our benchmarks in
\cref{sec:experiments}, when quantifying with nested sweeping rather than each variable
independently, the total number of requests processed with the \texttt{or} operation decreases by
$13.9\%$ while the share of 2-ary product constructions increases from $57.3\%$ to $66.6\%$.

%

\subsection{Optimisations for Nested Sweeping} \label{sec:theory:nested optimisations}

While nested sweeping as described above is an improvement over previous work in
\cite{Soelvsten2022:TACAS}, there are multiple avenues to further improve its performance in
practice.

\if\arxiv1
\subsubsection{Terminal Arcs:}

No inner Apply sweep changes the value of terminals. Hence, requests of the form $\arc{s}{\top}$ and
$\arc{s}{\bot}$ can be forwarded to $s$ regardless of any nesting levels, $x_j$, in-between.
Furthermore, the request based on $t'$ in the outer Reduce sweep may trivially resolve into a
terminal. In this case, the resulting terminal can be forwarded to its parents (regardless of their
level). For \texttt{exists}, this would be if both $\lowof{t'}$ and $\highof{t'}$ are terminals or
either of them is the $\top$ terminal.

This decreases the size of \PQ{inner}{\downarrow}. Furthermore, it makes the requests placed in
\PQ{inner}{\downarrow} compatible with an implicit invariant of the Apply sweeps in
\cite{Soelvsten2022:TACAS}.

In practice, the number of requests skipped this way depends on the use-case and the scale. $3.7\%$
of all requests processed as part of our benchmarks (see \cref{sec:experiments} for a description)
are for terminals. On average, $6.9\%$ of the requests (with a median of $7.0\%$) are for terminals
in each benchmark.
For the Garden of Eden (GoE) benchmark specifically, $15.3\%$ of all requests are terminals on
average (median $17.7\%$). On the other hand for the Quantified Boolean Formula (QBF) benchmark,
only $5.0\%$ (median $5.9\%$) of them were.

\fi

\subsubsection{Bail-out of Inner Sweep:}

There is no need for the outer Reduce sweep to invoke the inner sweeps if \PQ{inner}{\downarrow}
only contains requests that preserve subtrees, i.e.\ if Case~\ref{nested:outer:recurse} in the outer
Reduce did not create any requests that manipulate the accumulated OBDD in $\FILE{outer}'$. On level
$x_j$, such requests can stem from a redundant node $t'$ being suppressed. For \texttt{exists}, this
may also occur due to either $\lowof{t'}$ or $\highof{t'}$ being the $\bot$ terminal, which is
neutral for the \texttt{or} operation, or being $\top$, which is short-circuiting it.

In this case, the entire content of \PQ{inner}{\downarrow} can be redistributed between
\PQ{outer}{\uparrow} and \PQ{inner}{\downarrow} for the next deepest to be quantified level, $x_i$.
After doing so, the outer Reduce sweep can immediately proceed processing the next level.

For \texttt{exists}, any short-circuiting by the \texttt{or} operation in
Case~\ref{nested:outer:recurse} of the outer Reduce sweep can kill off some subtrees in
$\FILE{outer}'$. In this case, one cannot skip the last invocation of the inner sweeps. Otherwise,
the final result $\FILE{outer}'$ could include dead nodes. Yet, even so, one can instead of the
expensive top-down algorithm, e.g.\ \texttt{or} for \texttt{exists}, invoke the inner Apply sweep
with a much simpler (and therefore faster) mark-and-sweep algorithm.

In practice, $75.6\%$ of all nested sweeps in our benchmarks (see \cref{sec:experiments} for their
presentation) are skippable. For each benchmark, between $6.8\%$ and $93.5\%$ of all nested
computations were skipped with an average of $59.2\%$ (median of $81.0\%$). The number of nested
levels depends on the problem domain and its instance. For the Garden of Eden (GoE) benchmark, only
$26.8\%$ of all nested computations were skipped on average (median $29.0\%$), whereas $82.7\%$
(median $84.3\%$) of all levels of the Quantified Boolean Formulas (QBFs) could be skipped.

\subsubsection{Root Requests Sorter:}

Instead of making the outer Reduce sweep push requests directly into \PQ{inner}{\downarrow}, it can
push it into an intermediate list of requests, \LIST{outer}{\downarrow}. The content of
\LIST{outer}{\downarrow} is sorted using the same ordering as \PQ{inner}{\downarrow} as the inner
Apply sweep is invoked, to then merge it on the fly with the inner Apply sweep's priority queue. This
allows one to postpone initialising this priority queue until the inner Apply sweep is invoked. This
has multiple benefits:
\begin{itemize}

\item \PQ{inner}{\downarrow}, resp.\ \PQ{inner}{\uparrow}, only exists and uses internal memory
  during the inner Apply sweep, resp.\ the inner Reduce sweep. Hence, the memory otherwise dedicated
  to \PQ{inner}{\downarrow} can be used in the outer Reduce sweep for the Reduce's per-level data
  structures in \cite{Soelvsten2022:TACAS}. Furthermore, this also increases the amount of space
  available to the inner Reduce sweep. Hence, this ought to improve the running time of both the
  inner and the outer Reduce sweeps.

\if\arxiv1%
\item In practice, sorting a list of elements once is significantly faster than maintaining an order
  in a priority queue \cite{Molhave2012}. Merging \LIST{outer}{\downarrow} on the fly with
  \PQ{inner}{\downarrow} is faster than passing requests to the inner sweep's priority queue.%
\fi%

\item If \PQ{inner}{\downarrow}, resp.\ \PQ{inner}{\uparrow}, is initialised for each inner Apply
  sweep, resp.\ inner Reduce sweep, then the monotonic and faster \emph{levelised priority queue} in
  the full version of \cite{Soelvsten2022:TACAS} can be used instead of a regular non-monotonic
  priority queue.

\item Levelised cuts \cite{Soelvsten2023:ATVA} bound the size of the priority queues in each
  individual inner Apply and Reduce sweep. Hence, for each nested sweep, one can, if it is safe to
  do so, replace \PQ{inner}{\downarrow} and/or \PQ{inner}{\uparrow} with a faster internal memory
  variant.

\item Levelised random access \cite{Soelvsten2024:SPIN} changes the order in which nodes are
  resolved on each level. This has to be accommodated for in the sorting predicate in
  \PQ{inner}{\downarrow}. Hence, \LIST{outer}{\downarrow} allows for this optimisation to be applied
  for each invocation of the inner Apply sweep depending on the width of $\FILE{outer}'$.

\end{itemize}
Furthermore, levelised cuts not only bound the size of \PQ{outer}{\uparrow} in the outer Reduce
sweep but also the size of \LIST{outer}{\downarrow}. Hence, while deciding whether
\PQ{outer}{\uparrow} fits into memory, one can also decide whether \LIST{outer}{\downarrow} does.

All in all, this allows the optimisations in \cite{Soelvsten2023:ATVA,Soelvsten2024:SPIN} to be
applied on a sweep-by-sweep basis. In practice, if one neither uses faster internal-memory variants
of \PQ{inner}{\downarrow} and \PQ{inner}{\uparrow} nor levelised random access, then Adiar needs a
total of 32.1~h to solve 145 out of the 147 benchmarks in \cref{sec:experiments}. Using these
two optimisations shaves 13.0~h off the total computation time (speedup of $1.68$). For each
individual instance, this improves Adiar's performance between a factor of $1.07$ and $5.05$ ($1.77$
on average)%
\if\arxiv1%
\footnote{%
  Compared to \cite{Soelvsten2023:ATVA}, the \emph{external} memory sorters in this comparison still
  use the levelised cuts to circumvent wasting time with initialising too much internal memory. This
  is why, there is not a speedup of several orders of magnitude.
  If this use of levelised cuts is also reverted to obtain its state back in
  \cite{Soelvsten2022:TACAS}, then preliminary experiments on a machine with 8 GiB of memory
  exhibits a speedup of $2.71$ on average.
  As memory increases, one should expect a difference similar to the one reported in
  \cite{Soelvsten2023:ATVA}}%
\fi%
. Furthermore, without \LIST{outer}{\downarrow}, the exponential blow-up in \cref{thm:multi
  quantification:complexity-upper} implies \PQ{inner}{\downarrow} would almost always have to use
external memory. As the optimisations in
\cite{Soelvsten2022:TACAS,Soelvsten2023:ATVA,Soelvsten2024:SPIN} would then not be applicable, one
would expect a slowdown of several orders of magnitude similar to
\cite{Soelvsten2023:ATVA}.

\if\arxiv1%
  \subsection{Optimisations for Quantification} \label{sec:theory:quantify optimisations}

  As presented above, the outer Apply sweep merely transposes the input BDD $f$. Yet, doing so may
  not make the most out of having to touch the entire BDD graph; as long as the result is a
  transposed BDD, one can incorporate additional computations inside of this sweep. Hence, we now
  explore possible top-down sweeps that can be used instead of the algorithm in
  \cref{thm:transpose:complexity}.

  \subsubsection{Pruning $\top$ Siblings:}

  \Cref{thm:nested:complexity-lower} and \cref{thm:multi quantification:complexity-upper} show a
  possibly wide gap in the potential performance of the nested \texttt{exists} algorithm.
  \Cref{thm:inner or:complexity} shows this stems from the possibility of some partially quantified
  result explodes exponentially in size. Yet, $T_{j}$ can only be larger than $T$ if it contains
  subtrees that will be pruned or merged later when another variable is quantified. This can only
  happen due to the $\top$ terminal shortcircuiting an \texttt{or}. Hence, to be closer to the lower
  bound in \cref{thm:nested:complexity-lower}, we need to identify redundant computation by pushing
  information about the $\top$ terminal down through the BDD of $f$.

  \begin{figure}[t]
  \centering

  \subfloat[Pruning due to $\top$.] {
    \label{fig:pruning_quantification:top}
    \centering

    \begin{tikzpicture}
      \node at (0, 1) (r) {};

      \node[shape = circle,    draw = black] at (0, 0) (n) {$x_i$};

      \node at (-0.6, -1) (alpha) {$\alpha$};
      \node[shape = rectangle, draw = black] at ( 0.6, -1) (T) {$\top$};

      \draw[->, dashed]
        (n) edge (alpha)
      ;
      \draw[->]
        (r) edge (n)
        (n) edge (T)
      ;

      \node at (1.3, 0) {$\Rightarrow$};

      \node at (2.4, 1) (r) {};

      \node[shape = rectangle, draw = black] at ( 2.4, -1) (T) {$\top$};

      \draw[<-]
        (r) edge (T)
      ;
    \end{tikzpicture}
  }
  \quad
  \subfloat[Skipping node due $\bot$.] {
    \label{fig:pruning_quantification:bot}
    \centering

    \begin{tikzpicture}
      \node at (0, 1) (r) {};

      \node[shape = circle,    draw = black] at (0, 0) (n) {$x_i$};

      \node at (-0.6, -1) (alpha) {$\alpha$};
      \node[shape = rectangle, draw = black] at ( 0.6, -1) (F) {$\bot$};

      \draw[->, dashed]
        (n) edge (alpha)
      ;
      \draw[->]
        (r) edge (n)
        (n) edge (F)
      ;

      \node at (1.3, 0) {$\Rightarrow$};

      \node at (2.4, 1) (r) {};

      \node at ( 2.4, -1) (alpha) {$\alpha$};

      \draw[<-]
        (r) edge (alpha)
      ;
    \end{tikzpicture}
  }

  \caption{Example of pruning quantification of a to-be quantified level $x_i$.}
  \label{fig:pruning_quantification}
\end{figure}

  As shown in \cref{fig:pruning_quantification:top}, one can collapse to be quantified nodes at a
  level $x_i$ if one of their children is the $\top$ terminal. Similarly, as shown in
  \cref{fig:pruning_quantification:bot}, one can skip over nodes with a $\bot$ terminal as its
  child. This can be done as part of a simple top-down sweep similar to the Restrict in the full
  version of \cite{Soelvsten2022:TACAS}.

  In the worst-case, this does not apply to any node in $f$ and so the output is similar to the
  algorithm in \cref{thm:transpose:complexity}. Our preliminary experiments indicate this
  approach introduces an overhead of up to $2\%$. Yet, if nodes are prunable, making $\Nprime < N$,
  then total performance can improve with up to $21\%$.

  \subsubsection{Deepest Variable Quantification:}

  The single-variable quantification in the full version of \cite{Soelvsten2022:TACAS} can also be
  used to transpose $f$. This removes one of the to be quantified variables $x_i \in X$ in
  $\Oh{\sort{N^2}}$ I/Os and time and $\Oh{N^2}$ space. The resulting transposed graph,
  $\FILE{outer}$, has size $\Nprime \leq N^2$. To not change the overall memory usage, one can
  choose $x_i$ to be the largest to be quantified variable. Doing so makes the levels at $x_i$ and
  below equivalent to $\FILE{inner}$ after the first inner Apply and Reduce sweep. That is,
  $\Nprime \leq N + T_{i}$ and one saves an entire nested sweep at no cost to memory usage.

  Our preliminary experiments indicate this only slows down computation time on average by $4.7\%$.
  We hypothesise this is due to the deepest variable $x_i$ is often close to the bottom of the BDD
  and so, this sweep is primarily transposing the graph with more complex logic.

  One can also incorporate the above pruning of $\top$ siblings inside this quantification sweep.
  This improves performance for applicable cases. But, it does not offset the additional overhead
  in the remaining cases.

  \subsubsection{Partial Quantification:}

  The single-variable quantification in the full version of \cite{Soelvsten2022:TACAS} can be
  generalised to partially resolve all $x_i \in X$ in a single sweep. The requests in
  \cite{Soelvsten2022:TACAS} are for pairs of nodes $(t_1,t_2) \in f \times f$.

  \begin{figure}[!b]
  \centering

  \subfloat[Fully quantified pair of nodes] {
    \label{fig:partial_quantification:skip}
    \centering

    \begin{tikzpicture}
      \node at (0, 1) (r) {};

      \node at (-0.6, 1) (r1) {};
      \node[shape = circle,    draw = black] at (-0.6, 0) (n1) {$x_i$};

      \node at ( 0.6, 1) (r2) {};
      \node[shape = circle,    draw = black] at ( 0.6, 0) (n2) {$x_i$};

      \node[shape = rectangle, draw = black] at (-1.2, -1) (F) {$\bot$};
      \node at ( 0.0, -1) (alpha) {$\alpha$};
      \node at (1.2, -1) (beta) {$\beta$};

      \draw[->, dashed]
        (n1) edge (F)
        (n2) edge (alpha)
      ;
      \draw[->]
        (r1) edge (n1)
        (n1) edge (alpha)
        (r2) edge (n2)
        (n2) edge (beta)
      ;

      \node at (1.8, 0) {$\Rightarrow$};

      \node at (2.6, 1) (r) {};

      \node at (2.6, -1) (prod) {$\alpha \times \beta$};

      \draw[<-]
        (r) edge (prod)
      ;
    \end{tikzpicture}
  }
  \quad
  \subfloat[Partially quantified pair of nodes] {
    \label{fig:partial_quantification:node}
    \centering

    \begin{tikzpicture}
      \node at (0, 1) (r) {};

      \node at ( 0.0, 1) (r1) {};
      \node[shape = circle,    draw = black] at ( 0.0, 0) (n1) {$x_i$};

      \node at (-0.6, -1) (alpha) {$\alpha$};
      \node[shape = rectangle, draw = black] at ( 0.6, -1) (F) {$\bot$};

      \node at ( 2.2, 1) (r2) {};
      \node[shape = circle,    draw = black] at ( 2.2, 0) (n2) {$x_i$};

      \node at ( 1.6, -1) (gamma) {$\gamma$};
      \node at (2.8, -1) (beta) {$\beta$};

      \draw[->, dashed]
        (n1) edge (alpha)
        (n2) edge (gamma)
      ;
      \draw[->]
        (r1) edge (n1)
        (n1) edge (F)
        (r2) edge (n2)
        (n2) edge (beta)
      ;

      \node at (3.4, 0) {$\Rightarrow$};

      \node at ( 4.8, 1) (r) {};

      \node[shape = circle,    draw = black] at ( 4.8, 0) (nprod) {$x_i$};

      \node at ( 4.2, -1) (prod1) {$\alpha \times \beta$};
      \node at ( 5.4, -1) (prod2) {$\gamma$};

      \draw[<-, dashed]
        (nprod) edge (prod1)
      ;
      \draw[<-]
        (r) edge (nprod)
        (nprod) edge (prod2)
      ;
    \end{tikzpicture}
  }

  \caption{Example of partial quantification of a level $x_i$.}
  \label{fig:partial_quantification}
\end{figure}

  Without loss of generality, assume $\topof{t_1} = \topof{t_2} = x_i \in X$. In this case, the
  2-ary product construction should turn into
  $(\lowof{t_1}, \highof{t_1}, \lowof{t_2}, \highof{t_2})$. If any of these four \texttt{uid}s are
  the $\top$ terminal then the entire request can immediately be resolved to $\top$. Furthermore,
  any $\bot$ terminal is neutral to the \texttt{or} operation and can be pruned from the 4-tuple.
  Similarly, any duplicate \texttt{uid}s can be merged. Since \texttt{or} is commutative, one can
  sort the 4-tuple to quickly identify these cases. If the resulting tuple has 2 or fewer entries
  remaining, the product construction can proceed as in the full version of
  \cite{Soelvsten2022:TACAS} (\cref{fig:partial_quantification:skip}). Otherwise, a new node with
  $x_i$ is created with the result of each half of the 4-tuple as its children
  (\cref{fig:partial_quantification:node}). Inductively, this is correct (after later quantification
  of the new $x_i$ node and its subtrees) as the \texttt{or} operation is associative.

  The resulting DAG is a 2-ary product construction of $f$, and so \FILE{outer} has size
  $\Nprime \leq N^2$. As per \cite{Soelvsten2022:TACAS}, this single sweep is computable in
  $\Oh{\sort{N^2}}$ I/Os and time and $\Oh{N^2}$ space.

  Similar to the two $\top$ terminal pruning above, partial quantification prunes shortcutted
  subtrees across all levels of $f$. Furthermore, similar to deepest quantification, it leaves at
  least one fewer levels of to be quantified variables to be processed later.

  Our preliminary experiments indicate, partial quantification can in practice improve performance
  up to $61\%$. Yet, many other intances slow down just as much (up to $115\%$, i.e.\ also a bit
  more than a factor of two). We hypothesise this is due to partial quantification pairing nodes
  with a conflicting assignment. For example, in \cref{fig:partial_quantification:node} $\alpha$ is
  paired with $\beta$ rather than $\gamma$. Oddly enough, in our preliminary experiments, the
  instances that were improved by $\top$ pruning are disjoint from the ones improved by partial
  quantification. Further research is needed to investigate why and when partial quantification is
  useful.

  \subsubsection{Repeated Partial Quantification:}

  The top-down sweep aboves produces a transposed and unreduced OBDD. Yet, it is in practice
  possible that $\Nprime$ is smaller than $(1+\epsilon) \cdot N$ for some $\epsilon > -1$,
  i.e.\ its size has not grown considerable. In this case, the resulting product construction has
  very few new BDD nodes that are potentially reducible. Hence, it may be more beneficial to
  untranspose the OBDD and then immediately rerun another transposing top-down sweep. Doing so with
  pruning or partial quantification can propagate the $\top$ terminal further and so prune more
  subtrees. Yet, it is unlikely that pruning $\top$ terminals in
  \cref{fig:pruning_quantification:top} makes said terminal available for another to be quantified
  variable. That is, it is unlikely in this case that a second sweep would further prune subtrees.
  Hence, this is most promising to do with partial quantification.

  Since there are very few new BDD nodes, it is unlikely that the Reduce sweep of
  \cite{Soelvsten2022:TACAS} will do much more than just untranspose the DAG. Hence, one would want
  to untranspose it with a simpler and faster algorithm. As can be seen in
  \cref{fig:bdd_example:representation}, one can instead merely sort all arcs on their source and
  then merge them on the fly into nodes. Asymptotically, this is still a $\Theta(\sort{\Nprime})$
  operation. But, the constant involved is smaller than the Reduce of \cite{Soelvsten2022:TACAS}.

  Hence, one can repeat the above partial quantification operation until $\Nprime$ exceeds
  $(1+\epsilon) \cdot N$, it has run $\delta$ times, or there are no more to be quantified variables
  left in \FILE{outer}. In practice, we have not yet found any instance where more than a single
  quantification sweep further improves performance. If further research can uncover when it is
  beneficial to use partial quantification, then we might be able to extrapolate this into a value
  of $\delta$.

\fi%

\if\arxiv1
  \section{Implementation of Nested Sweeping in Adiar} \label{sec:implementation}

  Most of the logic in \cref{sec:theory} can be implemented by wrapping the priority queues
  \PQ{outer}{\uparrow}, \PQ{inner}{\downarrow}, and \PQ{inner}{\uparrow} and
  \LIST{outer}{\downarrow} with additional logic on how to merge and whereto split requests.
  \begin{itemize}
  \item The logic of whether to push to \PQ{outer}{\uparrow} or \LIST{outer}{\downarrow} in the
    outer Reduce sweep, resp.\ Cases~\ref{nested:inner:parent} and \ref{nested:inner:next} in the
    inner Reduce sweep, is a conditional on level $x_j$, resp.\ $x_i$.

  \item During the inner Apply sweep, requests from \LIST{outer}{\downarrow} are merged on the fly
    with the ones pushed to \PQ{inner}{\downarrow}.

  \item When placing requests in \LIST{outer}{\downarrow}, they are marked as originating from the
    outer Reduce sweep. During the inner Reduce sweep, requests are forwarded to
    \PQ{outer}{\uparrow} or \PQ{inner}{\uparrow} depending on whether they are marked to be from the
    outer sweep or not.
  \end{itemize}
  This has been implemented in Adiar v2.0 with (compile-time known) \emph{decorators}: a class with
  the same interface as the priority queues runs the above logic before passing it onto the wrapped
  priority queues and sorters. This makes the logic of each sweep agnostic to and reusable in the
  context of nested sweeping.

The nested sweeping framework, i.e.\ the decorators, \LIST{outer}{\downarrow}, and the algorithm and
its optimisations, has been implemented with 1287 lines of templated C++ classes and functions.
Similar to \cite{Soelvsten2023:NFM,Soelvsten2023:ATVA}, the use of templates completely remove any
indirection and abstraction introduced for the sake of code quality. The entire framework has been
tested separately from the remaining codebase with 104 unit tests. The quantification algorithms
themselves grew from 548 lines of code and 84 unit tests to 1152 lines of code and 152 unit tests
(without any of the optimisations in \cref{sec:theory:quantify optimisations}).

\fi

\section{Experimental Evaluation} \label{sec:experiments}

\if\arxiv0%
  We have implemented the nested sweeping framework in \cref{sec:theory} in Adiar~v2.0. To evaluate
  the impact of this addition, we have run experiments aiming at answering the following three
  research questions:

\else%
  To evaluate the impact of adding the nested sweeping framework, we have run experiments aiming at
  answering the following three research questions:

\fi%

\begin{enumerate}

\item\label{rq:improvement} How does nested sweeping compare to the repeated use of the
  single-variable quantification from the full version of \cite{Soelvsten2022:TACAS}?

\item\label{rq:cal} How does Adiar with nested sweeping compare to the external memory BDD package,
  CAL~\cite{Sanghavi1996}?

\item\label{rq:competitors} How does Adiar with nested sweeping compare to conventional BDD packages
  that use depth-first recursion and memoisation
  \cite{Lind1999,Somenzi2015,Benes2020,Husung2024,Dijk2016}?

\end{enumerate}

\noindent As we are only concerned with the design of BDD algorithms, not when and how to use them,
we will not compare to non-BDD based approaches.

\subsection{Benchmarks}

We have implemented the following two benchmarks that rely on multi-variable quantification. Similar
to \cite{Soelvsten2022:TACAS,Soelvsten2023:NFM}, all benchmarks have been implemented on top of
C++-templated adapters for each BDD package. This makes each BDD package run the exact same set of
operations without introducing any indirection. The source code for all benchmarks can be found at
the following url:
\begin{center}
  \href{https://github.com/ssoelvsten/bdd-benchmark}{github.com/ssoelvsten/bdd-benchmark}
\end{center}

\subsubsection{QBF Solving:}

Given a Quantified Boolean Formula (QBF) in the \texttt{QCIR}~\cite{QBF2014} format, each gate of
the given circuit is recursively transformed into a BDD. For inputs, we use the 102 encodings from
\cite{Shaik2023:arXiv} of 2-player games on a grid. In our experience, the symbolic style of these
inputs makes them well suited to be solved with BDDs. Hence, they provide a typical use-case of
quantification in BDDs. Furthermore, though these inputs are not in CNF they are in prenex form. In
practice, resolving these prenex quantifications at the end is computationally much more expensive
than computing the to be quantified circuit, i.e.\ the \emph{matrix}.

Based on preliminary experiments, we use a variable order based on a depth-first traversal of the
given circuit. In the prenex, we merge adjacent blocks with the same quantifier to increase the
number of concurrently quantified variables.

\subsubsection{Garden-of-Eden:}

A \emph{Garden-of-Eden}~\cite{Moore1962} (GoE) is any configuration of a cellular automaton without
a predecessor. For Conway's Game of Life on a grid of size $n_r \times n_c$, this can be encoded as a
relation with $(n_r+2) \cdot (n_c+2)$ \emph{previous}-state variables, $\vec{x}$, and
$n_r \cdot n_c$ \emph{next}-state variables, $\vec{x'}$. Next-state variables, $x_i' \in \vec{x'}$,
in the BDD follow a row-major order. Previous-state variables, $x_i \in \vec{x}$, are interleaved to
directly precede their respective next-state variable, $x_i'$. A GoE corresponds to an unsatisfying
assignment to $x_i' \in \vec{x'}$ after having quantified all previous-state variables, $\vec{x}$.

Recent results show there exists no GoE for Conway's Game of Life of size $8 \times 8$ or smaller
\cite{Beer2022}. Hence, the BDD for the $n_r \times n_c \leq 8 \times 8$ sized transition relation
will collapse to $\top$ after existential quantification. Yet, the row-major encoding of the
transition relation itself only requires a polynomially sized BDD. Hence, the complexity of this
problem manifests during the existential quantification.

\subsection{Hardware and Settings}

As in \cite{Soelvsten2022:TACAS,Soelvsten2023:NFM,Soelvsten2023:ATVA,Soelvsten2024:SPIN}, we have
run our experiments on the \emph{Grendel} cluster at the Centre for Scientific Computing Aarhus. In
particular, we ran both benchmarks on machines with 48-core $3.0$ GHz Intel Xeon Gold 6248R
processors, $384$ GiB of RAM, $3.5$ TiB of SSD disk, and run Rocky Linux (Kernel 4.18.0-513). All
code was compiled with GCC $10.1.0$ or \texttt{rustc} $1.72.1$.
Each BDD package was given $\tfrac{9}{10}$th of the available RAM, i.e.\ $345$~GiB, leaving
$\tfrac{1}{10}$th to other data structures and the operating system. Next to that, the BDD packages
use a single thread and their default/recommended settings.

Note that these machines have vasts amounts of memory. This is to ensure that depth-first
implementations are not slowed down by external factors. If less memory is available, then
depth-first implementations would have to run multiple garbage collections to stay within the memory
limits (cf.\ the largest instances solved by BuDDy~\cite{Lind1999} in
\cref{fig:nested_vs_depth-first:buddy}). This, in turn, clears their memoisation tables and forces
them to recompute previous results. Furthermore, this large amount of memory ensures they can solve
larger problems without using the swap partition. If they had to use it then they would slow down by
about two orders of magnitude (see the full paper of \cite{Soelvsten2022:TACAS} for an example).
Hence, machines of this scale allow us to measure the algorithms' running time without the noise
otherwise introduced by their execution environment. Finally, this biases the running time in favour
of the depth-first implementations, which in turn makes the numbers we report on Adiar's relative
performance close to the worst-case.

Adiar does not yet support dynamic variable reordering. Hence, we have disabled reordering for all
other BDD packages as part of these experiments.

\subsection{Experimental Results} \label{sec:experiments:results}

The computing cluster's scheduler does not let many long-running jobs run concurrently. To obtain
all 1176 data points reported below within only a few months, we had to place each of the 147
instances in buckets of instances with a common timeout. In particular, an instance is placed in the
bucket with the smallest timeout that is four times larger than Adiar needed during preliminary
experiments. That is, a BDD package timing out should only be understood as it (possibly) being
considerably slower than Adiar.

Depending on an instance bucket placement, running time measurements were made 1 to 3 times. Due to
node failures on the cluster, Adiar with nested sweeping was run once more, resulting in its
measurements being repeated on many instances 4 times. On average, all data points had $3.0$
measurements. Similar to \cite{Soelvsten2022:TACAS,Soelvsten2023:NFM,Soelvsten2023:ATVA}, we report
for each benchmark the minimum time recorded as it is the measurement with the least noise
\cite{Chen2016}. Average ratios are aggregated using the geometric mean.

Adiar needs less than 1~s to solve 27 out of the 45 GoE instances, resp.\ 50 out of the 102 QBF
instances. As will become evident later with \cref{fig:nested_vs_cal,fig:nested_vs_depth-first},
this makes them so small that they are not within the current scope of Adiar. For completeness, we
still show and discuss these results.

\if\arxiv1
  \begin{figure}[!b]
\else
  \begin{figure}[t]
\fi
  \centering

  \begin{tikzpicture}
    \begin{axis}[%
      width=0.92\linewidth, height=0.45\linewidth,
      every tick label/.append style={font=\scriptsize},
      xlabel={\scriptsize Running Time of Adiar with Single-variable Quantification (s)},
      xmin=0.01,
      xtick={0.01,0.1,1,10,100,1000,10000,100000},
      xmax=100000,
      xmode=log,
      ylabel={\scriptsize Speed-up with Nested Sweeping},
      ymin=0.125,
      ymax=2,
      ymode=log,
      ytick={0.125, 0.25, 0.5, 1, 2},
      yticklabels={
        $8 \times$,
        $4 \times$,
        $2 \times$,
        $1 \times$,
        $\tfrac{1}{2} \times$
      },
      y dir=reverse,
      grid style={dashed,black!12},
      ]

      \addplot[domain=0.01:1000000, samples=8, color=black]
      {1};

      \draw[densely dotted, opacity=0.4] (0,0.5) -- (0,-1.9);

      \begin{scope}[blend mode=soft light]

        \addplot+ [forget plot, style=dots_goe]
        table {./data/singleton_goe.tex};
        \addplot+ [forget plot, style=dots_qbf]
        table {./data/singleton_qbf.tex};
      \end{scope}
    \end{axis}
  \end{tikzpicture}

  {
    \tikzdot{goe} GoE \quad \tikzdot{qbf} QBF
  }

  \caption{Relative time ($T_{\text{old}} / T_{\text{new}}$) of quantification with nested sweeping
    ($T_{\text{new}}$) compared to the previous repeated single-variable quantification
    ($T_{\text{old}}$).
  }
  \label{fig:nested_vs_singleton}
\end{figure}

\subsubsection{RQ~\ref{rq:improvement}: Improvement by Nested Sweeping}
\label{sec:experiments:singleton}

\Cref{fig:nested_vs_singleton} shows the speed-up of using Adiar with nested sweeping %
\if\arxiv1%
  (without any optimisations in \cref{sec:theory:quantify optimisations}) %
\fi%
relative to quantifying each variable individually. Across all instance sizes, nested sweeping is in
general an improvement in performance. We have recorded a slowdown of up a factor of $1.05$ for 5
instances. Yet, we also recorded speed-ups up to a factor of $5.88$ for the 142 remaining instances.
On average, performance improves by a factor of $1.7$ for both QBF and GoE. The total computation
time was decreased by $21\%$ from $49.4$~h to $39.1$~h.

\subsubsection{RQ~\ref{rq:cal}: Comparison to CAL}
\label{sec:experiments:cal}

To the best of our knowledge, CAL~\cite{Sanghavi1996} (based on \cite{Ochi1993,Ashar1994}) is the
only other BDD package also designed to manipulate BDDs larger than main memory. To do so, it uses
breadth-first algorithms that should work well with BDDs stored on disk via the operating system's
swap memory \cite{Sanghavi1996}. CAL also includes algorithms to support multi-variable
quantification \cite{Sanghavi1996}. For more details, see \cite{Sanghavi1996} and \cref{sec:related
  work}. The machines for our experiments provide a 48~GiB swap partition, i.e.\ a $12.5\%$ increase
in available space.

\if\arxiv1
  \begin{table}[t!]
\else
  \begin{table}[t]
\fi
  \centering

  \caption{Time taken and the average ratio between Adiar and CAL for the 124 commonly solved
    instances. The average (geometric mean) pertains only to instances where Adiar needed at least
    1~s to solve them. Ratios larger than $1.00$ means Adiar is faster.}
  \label{tab:cal}

  \bgroup
  \def\arraystretch{1.1}
  \setlength\tabcolsep{5pt}
  \begin{tabular}{l||r|r||c|c||c|c}
            & \multicolumn{2}{c||}{Time}
            & \multicolumn{2}{c||}{\# Solved}
            & \multicolumn{2}{c}{Avg. Ratio (1+s)}
    \\
            & \multicolumn{1}{c|}{\tikzdot{goe} GoE} & \multicolumn{1}{c||}{\tikzdot{qbf} QBF}
            & \tikzdot{goe} GoE                      & \tikzdot{qbf} QBF
            & \tikzdot{goe} GoE                      & \tikzdot{qbf} QBF
    \\ \hline \hline
    Adiar   & 7431.9s                                & 688.0s
            & 45                                     & 102
            & --                                     & --
    \\ \hline
    CAL     & 184688.3s                              & 295660.0s
            & 38                                     & 86
            & 5.0                                    & 25.2
  \end{tabular}
  \egroup
\end{table}

\if\arxiv1
  \begin{figure}[t]
\else
  \begin{figure}[t]
\fi
  \centering

  \begin{tikzpicture}
    \begin{axis}[%
      width=0.92\linewidth, height=0.45\linewidth,
      every tick label/.append style={font=\scriptsize},
      xlabel={\scriptsize Running Time of Adiar with Nested Sweeping (s)},
      xmin=0.01,
      xtick={0.01,0.1,1,10,100,1000,10000,100000},
      xmax=100000,
      xmode=log,
      ylabel={\scriptsize Relative Time of CAL},
      ymin=0.0039,
      ymax=4000,
      ytick = {0.00390625,0.015625,0.0625,0.25,1,4,16,64,256,1024},
      yticklabels = {
        $2^{-8} \times$,
        $2^{-6} \times$,
        $2^{-4} \times$,
        $2^{-2} \times$,
        $1 \times$,
        $2^{2} \times$,
        $2^{4} \times$,
        $2^{6} \times$,
        $2^{8} \times$,
        $2^{10 \times}$
      },
      ymode=log,
      grid style={dashed,black!12},
      ]

      \addplot[domain=0.01:1000000, samples=8, color=black]
      {1};

      \draw[densely dotted, opacity=0.4] (0,7.0) -- (0,-4.5);

      \begin{scope}[blend mode=soft light]


        \addplot+ [forget plot, style=dots_goe]
        table {./data/cal_goe__solved.tex};
        \addplot+ [forget plot, style=dots_qbf]
        table {./data/cal_qbf__solved.tex};

        \addplot+ [forget plot, style=x_goe]
        table {./data/cal_goe__timeouts.tex};
        \addplot+ [forget plot, style=x_qbf]
        table {./data/cal_qbf__timeouts__truncated.tex};
      \end{scope}
    \end{axis}
  \end{tikzpicture}

  {
    \tikzdot{goe} GoE \quad \tikzdot{qbf} QBF
  }

  \caption{Relative performance ($T_{\text{CAL}} / T_{\text{Adiar}}$) of CAL ($T_{\text{CAL}}$)
    compared to Adiar with Nested Sweeping ($T_{\text{Adiar}}$). Time-/Memouts are marked as crosses.
  }
  \label{fig:nested_vs_cal}
\end{figure}

Preliminary experiments indicated CAL's breadth-first algorithms are much slower than Adiar's
time-forward processing. Hence, we multiplied the timeout for CAL by a factor of $3$. But as is
evident in \cref{fig:nested_vs_cal}, this increase turned out to still overestimate CAL's
performance on larger instances. Hence, the running times and averages in \cref{fig:nested_vs_cal}
and \cref{tab:cal} pertain only to the 124 instances which CAL can solve within the given RAM, SWAP,
and the time limits.

Even though this discards the instances where CAL struggles, i.e.\ the data points that remain are
in CAL's favour, \cref{fig:nested_vs_cal} shows Adiar heavily outperforms CAL for instances where
Adiar takes $1$~s or longer to solve. Where CAL uses $133.4$~h to solve 124 instances, Adiar, by
solving them in only $2.3$~h, is $59.1$ times faster. On these larger instances, CAL is on average
$14.7$ times slower than Adiar. As is evident in \cref{fig:nested_vs_cal} and \cref{tab:cal}, Adiar
especially outperforms CAL on the QBF benchmark. For example, the largest difference was measured
for the \texttt{hex/hein\_15\_5x5-13} QBF instance, where CAL is $1081$ times slower than the
$71.2$~s Adiar needs to solve it.

CAL is considerably faster for the instances where Adiar takes less than $1$~s to solve. At this
small scale, CAL uses the conventional depth-first algorithms \cite{Soelvsten2023:ATVA} to sidestep
the performance issues of its external memory breadth-first algorithms. Doing the same for Adiar is
still left as future work \cite{Soelvsten2023:ATVA,Soelvsten2024:SPIN}.

\if\arxiv1
  \begin{figure}[htbp!]
\else
  \begin{figure}[p]
\fi
  \centering

  \subfloat[BuDDy] {
    \label{fig:nested_vs_depth-first:buddy}

    \begin{tikzpicture}
      \begin{axis}[%
        width=0.45\linewidth, height=0.41\linewidth,
        every tick label/.append style={font=\scriptsize},
        xlabel={\scriptsize Adiar Running Time (s)},
        xmin=0.01,
        xmax=100000,
        xtick={0.01,0.1,1,10,100,1000,10000,100000},
        xmode=log,
        ylabel={\scriptsize Relative Running Time},
        ymin=0.00390625,
        ymax=5.0,
        ytick = {0.00390625,0.015625,0.0625,0.25,1,4},
        yticklabels = {
          $2^{-8} \times$,
          $2^{-6} \times$,
          $2^{-4} \times$,
          $2^{-2} \times$,
          $1 \times$,
          $2^{2} \times$
        },
        ymode=log,
        grid style={dashed,black!12},
        ]

        \addplot[domain=0.001:100000, samples=8, color=black]
        {1};

        \draw[densely dotted, opacity=0.4] (0,1.0) -- (0,-5.0);

        \begin{scope}[blend mode=soft light]

          \addplot+ [forget plot, style=dots_goe] table {./data/buddy_goe__solved.tex};
          \addplot+ [forget plot, style=dots_qbf] table {./data/buddy_qbf__solved.tex};

          \addplot+ [forget plot, style=x_goe] table {./data/buddy_goe__timeouts__truncated.tex};
          \addplot+ [forget plot, style=x_qbf] table {./data/buddy_qbf__timeouts__truncated.tex};
        \end{scope}
      \end{axis}
    \end{tikzpicture}
  }
  \subfloat[CUDD] {
    \label{fig:nested_vs_depth-first:cudd}

    \begin{tikzpicture}
      \begin{axis}[%
        width=0.45\linewidth, height=0.41\linewidth,
        every tick label/.append style={font=\scriptsize},
        xlabel={\scriptsize Adiar Running Time (s)},
        xmin=0.01,
        xmax=100000,
        xtick={0.01,0.1,1,10,100,1000,10000,100000},
        xmode=log,
        ylabel={\scriptsize Relative Running Time},
        ymin=0.00390625,
        ymax=5.0,
        ytick = {0.00390625,0.015625,0.0625,0.25,1,4},
        yticklabels = {
          $2^{-8} \times$,
          $2^{-6} \times$,
          $2^{-4} \times$,
          $2^{-2} \times$,
          $1 \times$,
          $2^{2} \times$
        },
        ymode=log,
        grid style={dashed,black!12},
        ]

        \addplot[domain=0.01:100000, samples=8, color=black]
        {1};

        \draw[densely dotted, opacity=0.4] (0,1.0) -- (0,-5.0);

        \begin{scope}[blend mode=soft light]

          \addplot+ [forget plot, style=dots_goe] table {./data/cudd_goe__solved.tex};
          \addplot+ [forget plot, style=dots_qbf] table {./data/cudd_qbf__solved.tex};

          \addplot+ [forget plot, style=x_goe] table {./data/cudd_goe__timeouts__truncated.tex};
          \addplot+ [forget plot, style=x_qbf] table {./data/cudd_qbf__timeouts__truncated.tex};
        \end{scope}
      \end{axis}
    \end{tikzpicture}
  }

  \medskip

  \subfloat[LibBDD] {
    \label{fig:nested_vs_depth-first:libbdd}

    \begin{tikzpicture}
      \begin{axis}[%
        width=0.45\linewidth, height=0.41\linewidth,
        every tick label/.append style={font=\scriptsize},
        xlabel={\scriptsize Adiar Running Time (s)},
        xmin=0.01,
        xmax=100000,
        xtick={0.01,0.1,1,10,100,1000,10000,100000},
        xmode=log,
        ylabel={\scriptsize Relative Running Time},
        ymin=0.00390625,
        ymax=5.0,
        ytick = {0.00390625,0.015625,0.0625,0.25,1,4},
        yticklabels = {
          $2^{-8} \times$,
          $2^{-6} \times$,
          $2^{-4} \times$,
          $2^{-2} \times$,
          $1 \times$,
          $2^{2} \times$
        },
        ymode=log,
        grid style={dashed,black!12},
        ]

        \addplot[domain=0.01:100000, samples=8, color=black]
        {1};

        \draw[densely dotted, opacity=0.4] (0,1.0) -- (0,-5.0);

        \begin{scope}[blend mode=soft light]

          \addplot+ [forget plot, style=dots_goe] table {./data/libbdd_goe__solved.tex};
          \addplot+ [forget plot, style=dots_qbf] table {./data/libbdd_qbf__solved.tex};

          \addplot+ [forget plot, style=x_goe] table {./data/libbdd_goe__timeouts__truncated.tex};
          \addplot+ [forget plot, style=x_qbf] table {./data/libbdd_qbf__timeouts__truncated.tex};
        \end{scope}
      \end{axis}
    \end{tikzpicture}
  }

  \medskip

  \subfloat[OxiDD] {
    \label{fig:nested_vs_depth-first:oxidd}

    \begin{tikzpicture}
      \begin{axis}[%
        width=0.45\linewidth, height=0.41\linewidth,
        every tick label/.append style={font=\scriptsize},
        xlabel={\scriptsize Adiar Running Time (s)},
        xmin=0.01,
        xmax=100000,
        xtick={0.01,0.1,1,10,100,1000,10000,100000},
        xmode=log,
        ylabel={\scriptsize Relative Running Time},
        ymin=0.00390625,
        ymax=5.0,
        ytick = {0.00390625,0.015625,0.0625,0.25,1,4},
        yticklabels = {
          $2^{-8} \times$,
          $2^{-6} \times$,
          $2^{-4} \times$,
          $2^{-2} \times$,
          $1 \times$,
          $2^{2} \times$
        },
        ymode=log,
        grid style={dashed,black!12},
        ]

        \addplot[domain=0.01:100000, samples=8, color=black]
        {1};

        \draw[densely dotted, opacity=0.4] (0,1.0) -- (0,-5.0);

        \begin{scope}[blend mode=soft light]

          \addplot+ [forget plot, style=dots_goe] table {./data/oxidd_goe__solved.tex};
          \addplot+ [forget plot, style=dots_qbf] table {./data/oxidd_qbf__solved.tex};

          \addplot+ [forget plot, style=x_goe] table {./data/oxidd_goe__timeouts__truncated.tex};
          \addplot+ [forget plot, style=x_qbf] table {./data/oxidd_qbf__timeouts__truncated.tex};
        \end{scope}
      \end{axis}
    \end{tikzpicture}
  }
  \subfloat[Sylvan] {
    \label{fig:nested_vs_depth-first:sylvan}

    \begin{tikzpicture}
      \begin{axis}[%
        width=0.45\linewidth, height=0.41\linewidth,
        every tick label/.append style={font=\scriptsize},
        xlabel={\scriptsize Adiar Running Time (s)},
        xmin=0.01,
        xmax=100000,
        xtick={0.01,0.1,1,10,100,1000,10000,100000},
        xmode=log,
        ylabel={\scriptsize Relative Running Time},
        ymin=0.00390625,
        ymax=5.0,
        ytick = {0.00390625,0.015625,0.0625,0.25,1,4},
        yticklabels = {
          $2^{-8} \times$,
          $2^{-6} \times$,
          $2^{-4} \times$,
          $2^{-2} \times$,
          $1 \times$,
          $2^{2} \times$
        },
        ymode=log,
        grid style={dashed,black!12},
        ]

        \addplot[domain=0.01:100000, samples=8, color=black]
        {1};

        \draw[densely dotted, opacity=0.4] (0,1.0) -- (0,-5.0);

        \begin{scope}[blend mode=soft light]

          \addplot+ [forget plot, style=dots_goe] table {./data/sylvan_goe__solved.tex};
          \addplot+ [forget plot, style=dots_qbf] table {./data/sylvan_qbf.tex};

          \addplot+ [forget plot, style=x_goe] table {./data/sylvan_goe__timeouts__truncated.tex};
        \end{scope}
      \end{axis}
    \end{tikzpicture}
  }

  \vspace{7pt}

  {
    \tikzdot{goe} GoE \quad \tikzdot{qbf} QBF
  }

  \caption{Relative performance of depth-first implementations compared to Adiar with nested
    sweeping. Time-/Memouts are marked as crosses. 
  }
  \label{fig:nested_vs_depth-first}
\end{figure}

\label{sec:experiments:depth-first}

\subsubsection{RQ~\ref{rq:competitors}: Comparison to Depth-First Implementations}

\if\arxiv1
  \begin{table}[!b]
\else
  \begin{table}[!b]
\fi
  \centering

  \caption{Total time needed by Adiar and conventional depth-first implementations to solve the 140
    commonly solved instances. The average (geometric mean) covers all instances that were commonly
    solved by all BDD packages and where Adiar needed at least 1~s to solve. Ratios smaller than
    $1.00$ means Adiar is slower.}
  \label{tab:depth-first}

  \medskip

  \bgroup
  \def\arraystretch{1.1}
  \setlength\tabcolsep{5pt}

  \begin{tabular}{l||r|r||c|c||c|c}
            & \multicolumn{2}{c||}{Time}
            & \multicolumn{2}{c||}{\# Solved}
            & \multicolumn{2}{c}{Avg. Ratio (1+s)}
    \\
            & \multicolumn{1}{c|}{\tikzdot{goe} GoE} & \multicolumn{1}{c||}{\tikzdot{qbf} QBF}
            & \tikzdot{goe} GoE                      & \tikzdot{qbf} QBF
            & \tikzdot{goe} GoE                      & \tikzdot{qbf} QBF
    \\ \hline \hline
    Adiar   & 9655.7s                                & 4499.4s
            & 45                                     & 102
            & --                                     & --
    \\ \hline
    BuDDy   & 4725.5s                                & 3793.1s
            & 40                                     & 100
            & 0.30                                   & 0.25
    \\
    CUDD    & 10892.8s                               & 4591.9s
            & 40                                     & 101
            & 0.61                                   & 0.75
    \\
    LibBDD  & 4365.7s                                & 2687.3s
            & 43                                     & 101
            & 0.54                                   & 0.45
    \\
    OxiDD   & 21223.9s                               & 2379.6s
            & 41                                     & 101
            & 0.48                                   & 0.39
    \\
    Sylvan  & 2925.4s                                & 5841.4s
            & 44                                     & 102
            & 0.46                                   & 0.70
  \end{tabular}
  \egroup
\end{table}

We have compared the performance of Adiar with BuDDy~2.4~\cite{Lind1999},
CUDD~3.0.0~\cite{Somenzi2015}, LibBDD~0.5~\cite{Benes2020}, OxiDD~0.6~\cite{Husung2024}, and
Sylvan~1.8.1~\cite{Dijk2016}. Their individual performance relative to Adiar is shown in %
\cref{fig:nested_vs_depth-first}. Out of the 147 instances, 140 are solved by all depth-first BDD
packages, i.e. the remaining 7 instances have at least one BDD package running out of memory (MO) or
time (TO). Adiar solves all of them. Running out of time is most likely due to repeated need for
garbage collection, which essentially is equivalent to an MO. Yet for fairness,
\cref{tab:depth-first} shows the total time for these 140 commonly solved instances. The average
ratio, on the other hand, pertains to all instances solved by the respective BDD package.

The relative running time of BuDDy in \cref{fig:nested_vs_depth-first:buddy} and OxiDD in
\cref{fig:nested_vs_depth-first:oxidd} shows that Adiar's performance can be divided into three
categories: the \emph{small} instances that takes Adiar less than 1~s to solve but is out of its
(current) scope, the \emph{moderate} instances where Adiar needs between 1 and $10^3$~s to solve and
it is up to a constant factor of 4 slower than other BDD packages, and the \emph{large} instances
beyond $10^3$~s where other BDD packages slow down compared to Adiar due to limited internal memory
and repeated garbage collection. While the distinction is not as clear for CUDD in
\cref{fig:nested_vs_depth-first:cudd}, LibBDD in \cref{fig:nested_vs_depth-first:libbdd}, and Sylvan
in \cref{fig:nested_vs_depth-first:sylvan}, they also follow the same trend. This relative
performance is similar to our previous results in
\cite{Soelvsten2022:TACAS,Soelvsten2023:NFM,Soelvsten2023:ATVA,Soelvsten2024:SPIN}. That is, nested
sweeping allows Adiar to compute quantifications at no additional cost to our previous work.

As shown in \cref{tab:depth-first}, Adiar solves the 40 common GoE instances in $2.7$~h and the 100
common QBF instances in $1.25$~h. This makes it as fast as CUDD for both sets of benchmarks and
faster than OxiDD on GoE and Sylvan on QBF. BuDDy is faster for both benchmarks but its use of 32
bits indices limits the maximum BDD size it supports and hence the number of instances it can solve.
As shown in \cref{fig:nested_vs_depth-first:libbdd}, LibBDD is the only BDD package that is
consistently faster than Adiar (ignoring its three MOs). Most likely, this is due to its lack of a
shared unique node table. This improves its cache locality and removes the need for expensive
garbage collections. Hence, LibBDD can either fit its BDD computations into the internal memory (and
it is faster than Adiar) or it aborts.

As is evident from \cref{fig:nested_vs_depth-first:sylvan}, Sylvan is comparatively good at some of
the larger GoE instances, thereby beating all other BDD packages in the total time to solve its
subset of the GoE benchmarks. %
\if\arxiv1%
  As the BDD collapses to $\top$, one may expect this is due to Sylvan skipping the second recursive
  calls to \texttt{exists} if the first recursion resulted in $\top$. Yet, CUDD also includes this
  optimisation without exhibiting the same behaviour. Further investigation is needed to identify
  why Sylvan excels on these instances. Sylvan is also %
\else%
  Furthermore, it is %
\fi%
the only other BDD package able to solve all QBF instances within the given time limit. This is in
parts thanks to its small memory footprint per BDD node \cite{Dijk2016}. Yet, Sylvan requires a
total of $5.0$~h to solve all 102 QBF instances whereas Adiar only needed $3.6$~h, making Sylvan a
factor of 1.4 times slower than Adiar. If Sylvan was given access to more than a single thread, then
it would, of course, be expected to be at least as fast if not much faster than Adiar
\cite{Dijk2016}.

Considering the 140 commonly solved instances are, by definition, not the main focus of the
algorithms in Adiar and that the experiments have been designed in favour of the conventional BDD
packages, it is not surprising that Adiar does not greatly outshine the other BDD packages. Even so,
it is faster than some implementations in some cases -- despite storing BDDs on disk. Most
importantly, it solves more instances than any other BDD package, witnessed by the timeouts in
\cref{fig:nested_vs_depth-first}.

\section{Related Work} \label{sec:related work}

Many other implementations of BDDs also support quantification of multiple variables
\cite{Sanghavi1996,Lind1999,Somenzi2015,Benes2020,Husung2024,Dijk2016}. All these are based on a
nested (inner) operation being accumulated in an (outer) traversal of the input; the nested sweeping
framework achieves the same within the time-forward processing paradigm~\cite{Chiang1995,Arge1995:2}.

\if\arxiv1%
  \subsubsection{CAL:}
\fi%

The CAL~\cite{Sanghavi1996} BDD package (based on \cite{Ochi1993,Ashar1994}) is to the best of our
knowledge the only implementation of BDDs also designed to compute on BDDs whose size exceed main
memory. To do so, it uses breadth-first algorithms that are resolved level by level. For each level
it still follows the conventional approach: a unique node table is used to manage BDD nodes while a
polynomial running time is guaranteed by use of a memoisation table. These per-level hash tables,
both in theory and in practice, put an upper bound on the maximum BDD width that CAL can support
with a certain amount of internal memory \cite{Arge1996}.

\if\arxiv1%
  Its quantification operation also required additional ideas particular to the design of CAL. Since
  it uses a single breadth-first queue for each level, each queue contains requests for both the outer
  and the nested inner traversals. Hence, both can be -- and are -- processed simultaneously
  \cite{Sanghavi1996}. Furthermore, it switches between breadth- and depth-first evaluation of
  subtrees to improve performance: the outer traversal is depth-first for the BDD nodes with to be
  quantified variables and breadth-first otherwise. If the first subtree's quantification makes
  computing the other ones redundant, then all computation of the second is skipped. These depth-first
  steps are also placed in the very same queues as the breadth-first steps; doing so ensures no
  additional random access is introduced.

  By the nature of nested sweeping, our proposed algorithm is, unlike CAL, not easily able to skip
  redundant computations. In \cref{sec:theory:quantify optimisations} we investigate multiple
  promising avenues to achieve similar pruning of redundant computation. %
\else%
  To skip redundant computations, its quantification operation also required additional ideas
  particular to the design of CAL. By the nature of nested sweeping, our proposed algorithm is,
  unlike CAL, not able to skip redundant computations. %
\fi%
Furthermore, the lack of a unique node table in Adiar requires our algorithms to retraverse and copy
the subtrees that are unchanged. Even so, as evident in \cref{sec:experiments}, Adiar with nested
sweeping outperforms CAL by up to several orders of magnitude. Moreover, the I/O-efficient approach
in \cite{Arge1996,Soelvsten2022:TACAS}, and by extension the ones in this work, are, unlike CAL, I/O
efficient despite a BDD's level is wider than main memory.


\if\arxiv1%
  \subsubsection{Distribution Sweeping:}

  In the context of computational geometry, \emph{distribution sweeping}~\cite{Goodrich1993} is an
  I/O-efficient translations of internal memory sweepline algorithms. Here, the recursion is turned
  on its head: the recursive but I/O inefficient data structure is replaced with an I/O-efficient
  list and the iterative algorithm is instead turned into a recursive one. Specifically, all the
  points in the plane are sorted on the $x$-axis and distributed into $M/B$ vertical \emph{strips}
  (see \cref{sec:preliminaries:io} on the I/O-model). After these strips have been solved
  recursively, an $M/B$-way merge procedure both merges and prunes all strips into one while
  simultaneously recreating a vertical sweepline moving across all strips
  \cite{Goodrich1993,Brodal2002}.

  In our case of translating the \texttt{exists} algorithm (see \cref{fig:exists}), we also
  intend to move the recursion out of a data structure, namely out of the BDD. Unlike for
  distribution sweeping, we do not intend to divide-and-conquer the input but instead recurse
  through the dependencies of the algorithm's recursion, e.g.\ between the independent calls to
  \texttt{exists} and the nested \texttt{or} operation that depends on their result. Independent
  recursions are resolved simultaneously with regular time-forward processing sweeps as in
  \cite{Soelvsten2022:TACAS}. Dependencies are handled by moving requests from the priority queue of
  one time-forward processing sweep to the priority queue of another. When all dependencies have
  been moved, the current sweep is paused to then start a \emph{nested sweep} -- the results of
  which are in turn parsed to its dependencies.

\fi%

\section{Conclusions and Future Work} \label{sec:conclusion}

Each sweep in \cite{Soelvsten2022:TACAS} is independent of the others. Using only this approach, one
can only quantify a single variable a time but not multiple at once. In this work, we enable
multi-variable quantification with the \emph{nested sweeping} framework. Here, multiple sweeps work
together: each sweep forwards information within priority queues to itself, its parent, or its child
in a recursion stack.

\if\arxiv0%
  In practice, nested sweeping improves the performance of Adiar.
\else%
  In practice, nested sweeping has improved the total time that Adiar needs to solve our
  quantification benchmarks by $21\%$. On average, it improves the running time of each instance by
  a factor of $1.7$.
\fi%
This allows us to extend our previous results in
\cite{Soelvsten2022:TACAS,Soelvsten2023:NFM,Soelvsten2023:ATVA,Soelvsten2024:SPIN} to Adiar's
quantification operations: ignoring small instances, Adiar is at most 4 times slower than
conventional depth-first implementations. Adiar even outperforms depth-first implementations as they
get closer to the limits of internal memory. As Adiar's nested sweeping algorithms are implemented
on-top of the I/O-efficient data structures that were also used in
\cite{Soelvsten2022:TACAS,Soelvsten2023:NFM,Soelvsten2023:ATVA,Soelvsten2024:SPIN}, its performance
is unaffected by a limited internal memory \cite{Soelvsten2022:TACAS}. For example, whereas
CUDD~\cite{Somenzi2015} could solve 141 out of our 147 benchmark instances in 5.6~h, Adiar needed
only 4.6~h to do the same and it could also solve the remaining 6 instances.
\if\arxiv1%
  On average, Adiar is only $1.3$ times slower than CUDD for the instances that CUDD could solve.%
\fi%
Adiar is also faster, often by several orders of magnitude, than the only other external memory BDD
package, CAL~\cite{Sanghavi1996}.

The nested sweeping framework has already been generalised to pave the way for the implementation of
other multi-recursive BDD operations. We have used this to implement the relational product
\cite{Soelvsten2025:arXiv}; doing so requires several additional optimisations for its variable
relabelling and its combined \texttt{and-exists} \cite{Soelvsten2025:arXiv}. In
\cite{Soelvsten2025:arXiv}, we also provide an evaluation on a large collection of symbolic model
checking experiments. We also hope to use nested sweeping for functional composition and as the
foundation for novel I/O-efficient variable reordering procedures \cite{Soelvsten2025:thesis}.
Finally, nested sweeping opens up the possiblity to create an I/O-efficient implementation of other
types of decision diagrams. For example, both Quantum Multiple-valued Decision
Diagrams~\cite{Miller2006} and Polynomial Boolean Rings~\cite{Brickenstein2009} require nested
sweeps to implement their multiplication operations.

\subsection*{Acknowledgements}

Thanks to the Centre for Scientific Computing, Aarhus, for running our benchmarks on the Grendel
cluster and thanks to Marijn Heule and Randal E. Bryant for suggesting the Garden of Eden
problem and their ideas on how to encode it.

\subsection*{Data Availability Statement}

The source code for all our benchmarks is available on the following GitHub repository at commit
\texttt{51b1375}:
\begin{center}
  \href{https://github.com/ssoelvsten/bdd-benchmark}{github.com/ssoelvsten/bdd-benchmark}
\end{center}
The source code is also available at
\href{https://doi.org/10.5281/zenodo.17552837}{doi:10.5281/zenodo.4718224} on Zenodo. The raw data and
its analysis is available at \href{https://doi.org/10.5281/zenodo.17054026}{doi:10.5281/zenodo.17054026}
.

%
%
%

\DeclareRobustCommand{\VAN}[3]{#2} 

\bibliographystyle{splncs04}
\bibliography{references}

\end{document}